\documentclass[11pt]{article}

\usepackage{
    amssymb,
    amsmath,
    amsfonts,
    eurosym,
    geometry,
    ulem,
    graphicx,
    caption,
    color,
    setspace,
    sectsty,
    comment,
    footmisc,
    caption,
    pdflscape,
    subcaption,
    subfiles,
    titling,
    array,
    hyperref,
    booktabs,
    longtable,
    float,
    authblk,
    threeparttable,
    makecell}

\usepackage{relsize}

\usepackage[citestyle=ama, bibstyle=ama]{biblatex}

\AtEveryBibitem{%
  \clearfield{note}%
}
\AtEveryBibitem{\clearlist{publisher}}
\AtEveryBibitem{\clearlist{language}}

\usepackage{pgf,tikz}
\usetikzlibrary{arrows, automata}
\usetikzlibrary{shapes.geometric,positioning}
\usetikzlibrary{positioning,calc}

\usepackage{siunitx}
\newcolumntype{d}{S[input-symbols = ()]}

\normalem

\onehalfspacing
\newtheorem{theorem}{Theorem}

\newenvironment{proof}[1][Proof]{\noindent\textbf{#1.} }{\ \rule{0.5em}{0.5em}}

\newcolumntype{L}[1]{>{\raggedright\let\newline\\arraybackslash\hspace{0pt}}m{#1}}
\newcolumntype{C}[1]{>{\centering\let\newline\\arraybackslash\hspace{0pt}}m{#1}}
\newcolumntype{R}[1]{>{\raggedleft\let\newline\\\arraybackslash\hspace{0pt}}m{#1}}
\geometry{left=1.0in,right=1.0in,top=1.0in,bottom=1.0in}

\DeclareMathOperator{\E}{E}

\addbibresource{paper3.bib}

\begin{document}

\begin{titlepage}
\title{Estimating and evaluating counterfactual prediction models} 
\author[1-3]{Christopher B. Boyer\thanks{Corresponding author email: \href{mailto:boyerc5@ccf.org}{boyerc5@ccf.org}}}
\author[3-6]{Issa J. Dahabreh}
\author[7]{Jon A. Steingrimsson}

\affil[1]{Department of Quantitative Health Sciences, Cleveland Clinic Research, Cleveland, OH.}
\affil[2]{Department of Medicine, Cleveland Clinic Lerner College of Medicine, Case Western Reserve University, Cleveland, OH.}
\affil[3]{Department of Epidemiology, Harvard T.H. Chan School of Public Health, Boston, MA.}
\affil[4]{CAUSALab, Harvard T.H. Chan School of Public Health, Boston, MA.}
\affil[5]{Department of Biostatistics, Harvard T.H. Chan School of Public Health, Boston, MA.}
\affil[6]{Richard A. and Susan F. Smith Center for Outcomes Research, Beth Israel Deaconess Medical Center, Boston, MA.}
\affil[7]{Department of Biostatistics, Brown University School of Public Health, Providence, RI.}
\date{\today}
\maketitle

\begin{abstract}
\noindent

Counterfactual prediction methods are required when a model will be deployed in a setting where treatment policies differ from the setting where the model was developed, or when a model provides predictions under hypothetical interventions to support decision-making. However, estimating and evaluating counterfactual prediction models is challenging because, unlike traditional (factual) prediction, one does not observe the potential outcomes for all individuals under all treatment strategies of interest. Here, we discuss how to estimate a counterfactual prediction model, how to assess the model's performance, and how to perform model and tuning parameter selection. We provide identification and estimation results for counterfactual prediction models and for multiple measures of counterfactual model performance, including loss-based measures, the area under the receiver operating characteristic curve, and the calibration curve. Importantly, our results allow valid estimates of model performance under counterfactual intervention even if the candidate prediction model is misspecified, permitting a wider array of use cases. We illustrate these methods using simulation and apply them to the task of developing a statin-na\"{i}ve risk prediction model for cardiovascular disease. \\

\vspace{0in} \\
\noindent\textbf{Keywords:} causal inference, prediction model, treatment drop-in, transportability, model performance, machine learning \\

\bigskip
\end{abstract}
\setcounter{page}{0}
\thispagestyle{empty}
\end{titlepage}
\pagebreak \newpage

\doublespacing

\section{Introduction} \label{sec:introduction}
Many common tasks in prediction modeling involve ``what if'' questions that call for predictions under conditions that are contrary to those prevalent when the training data are collected. When these changing conditions can be usefully thought of as resulting from  hypothetical interventions on variables relevant to the outcome, we refer to the task as counterfactual prediction \cite{hernan_second_2019,dickerman_counterfactual_2020,van_geloven_prediction_2020}. There exist several broad categories of modeling tasks that involve counterfactual predictions. 

First, a model may be deployed in a setting that differs from the setting from which training data were gathered in terms of patterns of post-baseline treatments. These changes in treatment patterns could occur in a given target population over time due to treatment policy changes, commonly referred to as problems of ``domain adaption'' or ``dataset shift''  \cite{finlayson_clinician_2021, subbaswamy_development_2020}. Alternatively, they could occur when a model is transported from a source population to a new target population that has a similar baseline covariate distribution but differs in terms of post-baseline treatment patterns. In both cases, differences between the training and deployment settings can cause the predictive performance of models to degrade. While, ideally, one would re-train the model using data in the new setting, to reduce data collection costs or as a stop-gap while new data are collected, investigators may attempt to use existing data to adapt or tailor the model to target the expected outcome under a hypothetical intervention to match treatment patterns in the target setting where the model will be applied. Alternatively, investigators may use existing data to evaluate the counterfactual performance of the model, unadapted, in the target setting \cite{pajouheshnia_accounting_2017}. Both are examples of counterfactual prediction tasks.

Second, a model may be intended for use as a decision support tool, for instance, by clinicians to counsel patients about their risk of disease under alternative treatment plans. In certain unrealistic circumstances, such as when using data from a trial that (1) is representative of the target population, (2) randomly assigned participants to the treatment strategies considered for decision support, and (3) had complete adherence to the assigned strategies and no loss to follow-up \cite{glasziou_evidence_1995,dahabreh_using_2016,kent_personalized_2018,kent_predictive_2020,hoogland_tutorial_2021}, obtaining prediction models for the outcome under different treatment strategies may not require the additional formalism of counterfactual prediction. However, virtually all practical applications diverge from this ideal: trial participants are not representative of the target population, patterns of adherence differ between the trial and the setting in which the model will be deployed, followup is incomplete, or the model is trained using observational data \cite{schulam_reliable_2017-1,dickerman_counterfactual_2020}. Therefore, addressing the research question entails contrary-to-fact interventions, and calls for using methods for counterfactual prediction.

Third, specific interventions in the training data may be viewed as pernicious, undesired, or incompatible with the target setting in which the model will be deployed and thus ``removal'' of their effects may be attempted via prediction under a hypothetical intervention that would eliminated them. For instance, an analyst may be interested in developing a model for the risk of death in the absence of surgery, but may only have data collected from an observational setting where some of the patients ultimately received surgery, a problem sometimes described as ``treatment drop-in'' \cite{van_geloven_prediction_2020, sperrin_using_2018}. Provided that the hypothetical elimination of these interventions is sensible in the target setting and supported by the training data (in the sense that at least some individuals in the training data in fact follow the strategies of interest), attempting to estimate the treatment-free or treatment-na\"{i}ve risk is another example of a counterfactual prediction task.

Compared to factual prediction tasks, counterfactual prediction tasks require stronger assumptions that are closely related to those commonly invoked in the causal inference literature. They also require alternative methods that allow analysts to estimate and draw inferences about counterfactual estimands. Recently, several methods have been proposed for counterfactual prediction that seek to re-tool existing causal inference approaches for use in the prediction setting \cite{lin_scoping_2021,sperrin_using_2018,dickerman_predicting_2022,schulam_reliable_2017-1}. However, these methods do not always distinguish between the covariates available at the time the prediction is to be made and those required to satisfy the assumptions necessary for identifiability of the counterfactual estimand \cite{coston_counterfactual_2021}. Additionally, there has been comparatively little work to date (one exception is \textcite{coston_counterfactual_2020}) on how to estimate the performance of counterfactual prediction models independently of the method used to fit the model while also allowing for the possibility that the model is misspecified. 

In this paper, we provide formal identifiability results for estimating a counterfactual prediction model. We derive and describe estimation methods for the case when a sufficient set of covariates is available during model training to control treatment-outcome confounding, but allow for the model itself to be conditional on a smaller subset of ``predictors'' that are available when deployed. We also provide identifiability results and novel estimators for multiple model performance measures including loss-based measures (such as the mean-squared error), the area under the receiver operating characteristics curve (AUC), and the calibration curve. We illustrate these methods using simulation and apply them to the task of developing a statin-na\"{i}ve risk prediction model for cardiovascular disease.

\section{Set up and notation} \label{sec:setup}
Let $Y$ be the outcome of interest, $X$ a vector of baseline covariates, and $A$ an indicator of treatment after baseline. We assume that data are independent realizations $\{(X_i, A_i, Y_i)\}_{i=1}^n$ from a source population in an observational setting in which treatment is not determined by the investigator but rather initiated according to the pattern $f^{obs}(A | X)$. The covariates in $X$ include a set sufficient to control confounding of the treatment-outcome relationship ($L$) as well as additional predictors of the outcome ($P$). An example causal directed acyclic graph (DAG) for this process is shown in Figure \ref{fig:dag1}. We allow for the possibility that only a subset of covariates $X^*$, chosen based on their availability and predictive potential rather than on whether they control confounding, will be use at the time predictions are to be made (note that $X^*$ can include components of both $L$ and $P$).

\begin{figure}[t]
    \centering
    \begin{tikzpicture}[> = stealth, shorten > = 1pt, auto, node distance = 2.5cm, inner sep = 0pt,minimum size = 0.5pt, very thick]
    \tikzstyle{every state}=[
      draw = white,
      fill = white
    ]
    \node[state] (l0) {$L$};
    \node[state] (a0) [right of=l0] {$A$};
    \node[state] (y1) [right of=a0] {$Y$};
    \node[state] (u0) [below of=l0] {$U$};
    \node[state] (p0) [right of=u0] {$P$};

    \path[->] (l0) edge node {} (a0);
    \path[->] (l0) edge [out=45, in=135] node {} (y1);

    \path[->] (a0) edge node {} (y1);
    
    \path[->] (p0) edge [out=0,in=255] node {} (y1);

    \path[->] (u0) edge node {} (y1);
    \path[->] (u0) edge node {} (l0);
    \path[->] (u0) edge node {} (p0);
    \end{tikzpicture}
    \caption{Example causal directed acyclic graph (DAG) for prediction in a setting with a single time fixed treatment $A$ over follow up.}
    \label{fig:dag1}
\end{figure}
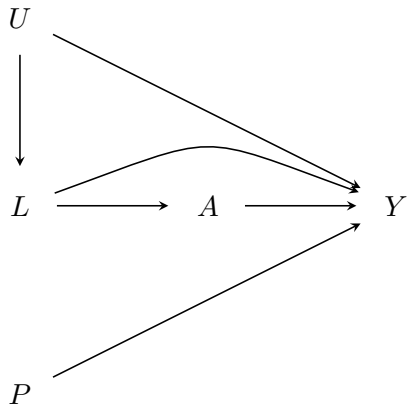

Let $Y^a$ denote the potential outcome under a hypothetical intervention to set treatment $A$ to $a$. Our objective is to build a model for the conditional expectation of the potential outcome in the target population under intervention $f^*(A|X)$ and assess its performance, where here we use $f(\cdot)$ generically to denote a density. For simplicity, in the main text we focus on simple intervention strategies where everyone receives treatment $A = a$, in which case the goal is to estimate $E[Y^a | X^*]$. Similarly, we assume, for now, that treatment is time-fixed, that is treatment, when initiated, is done so immediately after baseline and sustained over follow up or alternatively that its effect is independent of the time on treatment. In the Appendix, we extend our results to more general intervention strategies, including random and dynamic treatment regimes (Section \ref{sec:randomdynamic}), and to the case where treatment is time-varying (Section \ref{sec:timevarying}). Finally, we assume no censoring or loss to follow up. 

For the purposes of assessing model performance, we assume data are randomly split into a training set of $n_{train}$ observations and a test set of $n_{test}$ observations, with $n = n_{train} + n_{test}$. Let $D_{train}$ and $D_{test}$ be indicators of whether an observation is in the training set or test set respectively. We assume the model is to be trained in the training set and evaluated in the test set. Our results, however, apply equally to existing prediction models or (directly measured) biomarkers, in which case splitting is not necessary. To simplify exposition, we refer to a single test-train split throughout but, as we discuss further in Section \ref{sec:selection}, our set up easily accommodates more complex evaluation strategies such as cross-validation or methods based on bootstrapping. 

\section{Estimands for model estimation and performance} \label{sec:targets}
We distinguish between two possible counterfactual prediction tasks (i) estimating a counterfactual prediction model and (ii) assessing the performance of an arbitrary model, whether tailored or not tailored to a specific setting, through comparison of model predictions against the potential outcome under the specified intervention. For the first task, the goal is to directly estimate the expected potential outcome under intervention $A = a$ conditional on covariates $X^*$ which are a subset of $X$, in other words the estimand is $\mu_a(X^*) \equiv \E[Y^a | X^*]$. As $\mu_a(X^*)$ depends on the unknown potential outcome $Y^a$, it is not a function of the observed data without additional assumptions. Therefore, we derive identifiability results for $\mu_a(X^*)$ and associated estimation procedures to inform model development. We refer to this task as ``model tailoring'' because the estimation procedure is tailored to the outcome mean in a target population under the specified intervention.

To evaluate the performance of a prediction model, we would like to estimate one or more performance measures.  A number of performance measures have been proposed in the prediction literature \cite{harrell_multivariable_1996, altman_what_2000, steyerberg_clinical_2019}. An example performance measure of interest is 
\begin{equation*}
    \psi(a) \equiv \E[(Y^a - \widehat{\mu}(X^*))^2 \mid D_{test} = 1]
\end{equation*}
where the squared error loss $(Y^a - \widehat{\mu}(X^*))^2$ quantifies the discrepancy between the potential outcome under treatment level $A = a$ and the model prediction $\widehat{\mu}(X^*)$ in terms of the squared difference. In the main text, we focus on the mean squared error as the performance measure of interest $\psi(a)$. In the Appendix, we extend our results to the case where the measure is any member of a generic class of loss functions for counterfactual outcomes $L(Y^a,  \widehat{\mu}(X^*))$ (Section \ref{sec:tf_proof}) as well as more complex risk-based measures such as the area under the receiver operating characteristics curve (AUC, Section \ref{sec:auc}), which depends on paired observations, and the calibration curve (Section \ref{sec:calib}), which is a functional. Importantly, all the measures considered are identifiable without assuming that the model $\mu(X^*)$ is correctly specified, in the sense that it need not converge to the true conditional expectation $\mu_a(X^*)$.

\section{Identifiability conditions} \label{sec:identifiability}
We assume the following identifiability assumptions which have been described in more detail elsewhere \cite{hernan_causal_2020, robins_new_1986, robins_graphical_1987}.

\begin{enumerate}
    \item[A1.] \textit{Consistency.} If $A = a$, then $Y^a = Y$ 
    \item[A2.] \textit{Conditional exchangeability.} $Y^a \perp\!\!\!\perp A \mid X$ 
    \item[A3.] \textit{Positivity.} For all $x$ with positive density, i.e. $f_X(x) > 0$, $\Pr[A = a \mid X = x] > 0$ 
\end{enumerate}

Consistency implies that observed outcomes among those with $A = a$ reflect potential outcomes under corresponding level of treatment. It would be violated if, for instance, there were interference, Hawthorne effects, or if there were multiple ``hidden'' versions of the treatment under consideration \cite{rubin_randomization_1980,dahabreh2019generalizing}. The latter may be particularly likely in prediction datasets where information on treatment is passively collected, for instance in a electronic healthcare record where they are not well defined or include hidden components. The exchangeability condition stipulates that treatment is conditionally independent of the potential outcome given covariates $X$. It is often referred to as the ``no unmeasured confounding'' assumption as it implies that there are no unmeasured variables that affect both treatment assignment and the outcome. We require independence conditional on $X$ but recall that the prediction model itself may be conditional on only a subset $X^*$. This allows, for instance, a high-dimensional covariate vector to be collected for model training, while the ultimate prediction tool may only be based on a handful of easy-to-obtain measurements at runtime. In contrast to a typical observational causal inference set up, we allow $X$ to include predictors that are not confounders (e.g., elements of $P$ instead of $L$ in DAG in Figure \ref{fig:dag1}). However, we require that conditioning on these predictors must not lead to a violation of Assumption A2, as would happen if the predictor were in fact a collider. Therefore subject-matter expertise should still guide the selection of the full set of covariates $X$. Finally, the positivity condition implies that there is a positive probability of observed treatment level $A = a$ in all strata of $X$ that have positive density in the population in the setting of interest. Positivity violations may also be common in prediction datasets where data are often assembled for multiple prediction tasks and without regard to who is eligible for treatment.

\section{Estimating a counterfactual prediction model} \label{sec:model}

As we show in Appendix Section \ref{sec:model_proof}, under the conditions above $\E[Y^a | X^*]$ is identified by
\begin{equation}\label{eqn:estimand1}
    \mu_a(X^*) = \E[\E[Y \mid X, A = a, D_{train} = 1] \mid X^*, D_{train} = 1]
\end{equation}
or, equivalently, using an inverse probability weighted expression 
\begin{equation}\label{eqn:estimand2}
    \mu_a(X^*) = \E\left[\frac{I(A = a)}{\Pr[A = a \mid X, D_{train} = 1]} Y \Big| X^*, D_{train} = 1\right]
\end{equation}
The two expressions for $\mu_a(X^*)$ suggest possible approaches for estimating the counterfactual prediction model from the training data. 

One approach, based on equation \ref{eqn:estimand1}, is to subset to participants with corresponding treatment level $A = a$ in the training data and estimate a model $\mu(X)$ for the observed $Y$ conditional $X$, i.e. $\E[Y | X, A = a] = \mu(X)$. Then when the desired predictors $X^*$ are a subset of $X$, the covariates sufficient to ensure exchangeability,  predictions are marginalized (standardized) over the covariates in $X$ that are not in $X^*$. When the dimension of $X^*$ is small, this can be done nonparametrically; however, when $X^*$ is higher-dimensional, an additional modeling step will be required either (1) modeling the estimated $\widehat{\mu}(X)$ as a function of $X^*$, i.e. $\E[\widehat{\mu}(X) | X^*]$ or (2) modeling the conditional density of $X$ given $X^*$, i.e. $f(X | X^*)$.  The resulting predictions will be consistent for $\E[Y^a | X^*]$ provided all models are correctly specified. An alternative suggested in \cite{dickerman_predicting_2022}, would be to simulate samples from the model $\mu(X)$ and fit a second stage model $\mu^*(X^*)$ using only the subset of the predictors of interest. This is similar to the first standardization method above and has the advantage that the second stage model may be developed using an outcome with same support as the original $Y$.

A second approach, based on equation \ref{eqn:estimand2}, is to fit a weighted model $\mu(X^*)$, using for instance weighted maximum likelihood, with weights equal to the probability of receiving treatment level $A = a$ conditional on covariates $X$ necessary to ensure exchangeability, i.e., sample analogs of $W(a) = \frac{I(A = a)}{\Pr[A = a \mid X, D_{train} = 1]}$. This is the basis for previously proposed methods for counterfactual prediction based on inverse probability of treatment weighting \cite{sperrin_using_2018, van_geloven_prediction_2020}. Note that, as before, it is possible to specify a subset of predictors $X^*$ used in the prediction model $\mu(X^*)$ as compared to the full set of covariates $X$ required for exchangeability which are only necessary for defining the weights $W$, however a second marginalization or simulation step is not required. This means estimating a counterfactual prediction model using the weighting approach can be accomplished more easily using off-the-shelf software.

\section{Assessing model performance} \label{sec:performance}

Using the same identifiability conditions, in Appendix Section \ref{sec:tf_proof}, we show the model performance measure $\psi(a)$ is identifiable using data from the test set through the expression
\begin{equation}\label{eqn:mse1}
    \psi(a) \equiv \E\left[\E[(Y - \widehat{\mu}(X^*))^2 \mid X, A=a, D_{test} = 1] \mid D_{test} = 1\right]
\end{equation}
or, equivalently, using an inverse probability weighted expression, 
\begin{equation}\label{eqn:mse2}
    \psi(a) = \E\left[\frac{I(A = a)}{\Pr[A = a \mid X, D_{test} = 1]}(Y - \widehat{\mu}(X^*))^2 \mid D_{test} = 1\right]
\end{equation}
regardless of the model $\widehat{\mu}(X^*)$ (e.g.,~regardless of whether it has been tailored to target $\E[Y^a \mid X]$ or if it is correctly specified). 

As before, the two expressions suggest two different approaches for the estimation of model performance using the test data alone. 

First, using the sample analog of expression (\ref{eqn:mse1}), an estimator of the target MSE is 
\begin{equation}\label{eqn:cl_estimator}
    \widehat{\psi}_{CL}(a) = \frac{1}{n_{test}} \sum_{i=1}^nI(D_{test, i} = 1)\widehat{h}_a(X_i)
\end{equation}
where $\widehat{h}_a(X)$ is an estimator for the conditional loss $\E[(Y - \widehat{\mu}(X^*))^2 \mid X, A=a, D_{test} = 1]$ and from the perspective of estimating $\psi(a)$ may be considered a nuisance function. To keep notation simple, we suppress the dependency of $\widehat{h}_a(X)$ on $\widehat{\mu}$. When the dimension of $X$ is small it may be possible to estimate $\widehat{h}_a(X)$ nonparametrically. In almost all practical cases though some form of modeling will be required; in these cases, $\widehat{\psi}_{CL}(a)$ is a consistent estimator for $\psi(a)$ as long as the model for $\widehat{h}_a(X)$ is correctly specified.

Next, using the sample analog of expression (\ref{eqn:mse2}), an alternative weight-based estimator of the target MSE is 
\begin{equation}\label{eqn:ipw_estimator}
    \widehat{\psi}_{IPW}(a) = \frac{1}{n_{test}} \sum_{i=1}^n \frac{I(A_i = a, D_{test, i} = 1)}{\widehat{e}_a(X_i)}(Y_i - \widehat{\mu}(X^*_i))^2
\end{equation}
where $\widehat{e}_a(X)$ is another nuisance function estimating the probability of receiving treatment level $A = a$ conditional on $X$, i.e. $\Pr[A = a \mid X, D_{test} = 1]$. Again, when the dimension of $X$ is small it may be possible to use the sample analog of $\widehat{e}_a(X)$, but in most cases, it will have to be modeled. In these cases, $\widehat{\psi}_{IPW}(a)$ is a consistent estimator of $\psi(a)$ as long as the model for $\widehat{e}_a(X)$ is correctly specified.

The conditional loss estimator (\ref{eqn:cl_estimator}) relies on correctly specifying the model for the conditional loss and the weighting estimator (\ref{eqn:ipw_estimator})  relies on correctly specifying the model for the probability of treatment. In some settings, one estimator may be preferred over the other: when more is known about the mechanism for ``assigning'' treatment or when the outcome is very rare, the weighting estimator may be preferred \cite{robins_estimating_1992,braitman_rare_2002}; when the process that gives rise to the outcomes is well understood the conditional loss estimator may be preferred. In practice, however, both models may be difficult to specify correctly. Using data-adaptive and more flexible machine learning estimators for estimation of these nuisance models offers the possibility of capturing more complex data generation processes. These data-adaptive estimators generally have slower rates of convergence than the $\sqrt{n}$ rates of parametric models and therefore will not yield asymptotically valid confidence intervals \cite{chernozhukov_doubledebiased_2018}. To address this challenge, we can use a doubly-robust estimator which combines models for the conditional loss $\widehat{h}_a(X)$ and the probability of treatment $\widehat{e}_a(X)$, such as
\begin{equation}
    \widehat{\psi}_{DR}(a) = \frac{1}{n_{test}} \sum_{i=1}^n I(D_{test,i} = 1) \left[ \widehat{h}_a(X_i) + \frac{I(A_i = a)}{\widehat{e}_a(X_i)} \left\{ (Y_i - \widehat{\mu}(X^*_i))^2 - \widehat{h}_a(X_i)\right\}\right]
\end{equation}
As we show in Appendix section \ref{sec:dr}, under mild regularity conditions \cite{robins_higher_2008}, this estimator will be consistent if one of $\widehat{h}_a(X)$ and $\widehat{e}_a(X)$ is correctly specified. It also permits the use of machine learning or data-adaptive estimators that converge at rate slower than $\sqrt{n}$, thus allowing for more flexible estimation of the nuisance functions. This is due to the fact that the empirical process terms governing the convergence of $\widehat{\psi}_{DR}$ involve a product of the estimation errors for $\widehat{h}_a(X)$ and $\widehat{e}_a(X)$ which converge under the weaker condition that only the \textit{combined} rate of convergence for both nuisance functions is at least $\sqrt{n}$ \cite{chernozhukov_doubledebiased_2018}. In Appendix section \ref{sec:dr}, we present results for general loss functions under static interventions. We leave the derivation of doubly robust estimators for the AUC and calibration curve as well as random and dynamic regimes for future work.

\section{Model and tuning parameter selection} \label{sec:selection}

Up to this point, we have assumed that $\mu(X^*)$ is a pre-specified model and ignored any form of model selection (e.g., variable or other specification search) or data-adaptive tuning parameter selection, which may be the case when using an existing (validated) model. In many cases, however, analysts have to select between multiple models or perform a data-adaptive search through a parameter space for tuning parameter selection when developing a prediction model \cite{steyerberg_clinical_2019}. To avoid overfitting, analysts typically use methods such as cross-validation or the bootstrap to perform selection. These techniques rely on optimizing some measure of model performance, such as the MSE.

When performing model or tuning parameter selection for counterfactual prediction, the results from the previous sections suggest that the model performance measure should be targeted to the counterfactual performance if the intervention of interest were universally applied. For example, when using cross-validation for model selection the analyst splits the data into $K$ mutually exclusive ``folds'' and fits the candidate models using $K - 1$ of the folds and estimates the performance of each in the held out fold. This process is repeated $K$ times where each fold is left out once. The final performance estimate is the average of the $K$ estimates and the model with best overall performance is selected (or, alternatively, the tuning parameter with the best performance). When performing counterfactual prediction, at each stage in the procedure the analyst should use modified performance measures such as those in section \ref{sec:performance} above. Failure to do so, can lead to sub-optimal selection with respect to the counterfactual prediction of interest. 


\section{Simulation experiments} \label{sec:simulation}
We performed two Monte Carlo simulation experiments to illustrate (i) the benefits of tailoring models to the correct counterfactual estimand of interest, (ii) the potential for bias when using na\"{i}ve estimators of model performance such as the MSE, (iii) the importance of correct specification of the nuisance models when estimating counterfactual performance, and (iv) the properties of the doubly-robust estimator under misspecification of the nuisance models. We adapt data generation processes previously used for transporting models between settings under covariate shift \cite{steingrimsson_transporting_2023, morrison_robust_2022}.

\subsection{Experiment 1}
We simulated treatment initiation at baseline based on the logistic model $\Pr[A=1 \mid X]=\operatorname{expit} (1.5-0.3 X)$, where predictors $X$ are drawn from $X \sim$ Uniform $(0,10)$. Under this model, about 50\% initiate treatment but those with higher values of $X$ are less likely to start treatment than those with lower values of $X$. We then simulated the outcome using the linear model $Y=1+X+0.5 X^2- 3A + \varepsilon$, where $\varepsilon \sim \mathcal{N}(0, X)$.  We set the total sample size to 1000 and the data were randomly split in a 1:1 ratio into a training and a test set. In this simulation, $X = X^*$. The full process may be written:
\begin{align*}
    X & \sim \text{Unif}(0, 10) \\
    A & \sim \text{Bernoulli}\{\operatorname{expit}(-1.5 + 0.3 X)\} \\
    Y & \sim \text{Normal}(1 + X + 0.5 X^2 - 3 A, X)
\end{align*}    
 
Our goal was to estimate a model and evaluate it's performance in the same population where, contrary to fact, treatment was universally withheld, i.e. we targeted $\E[Y^{a=0} \mid X]$. Note that, for simplicity, in this case predictor variables are the same set necessary to control confounding $X = X^*$. Under this data generating mechanism, the MSE under no treatment is larger than the MSE under the natural course and identifiability conditions 1-3 are satisfied. We considered two parametric model specifications $\mu(X^*; \beta)$ and $\widetilde{\mu}(X^*, \beta)$:
\begin{enumerate}
    \item a correctly specified linear regression model that included the main effects of $X$ and $X^2$, i.e. $\mu(X; \beta) = \beta_0 + \beta_1 X + \beta_2 X^2$.
    \item a misspecified linear regression model that only included the linear effect of $X$, i.e. $\widetilde{\mu}(X; \beta) = \beta_0 + \beta_1 X$.
\end{enumerate} 
For each specification, we also considered two estimation strategies: one using ordinary least squares regression (OLS) and ignoring treatment initiation and the other using weighted least squares regression (WLS) where the weights were equal to the inverse of the probability of being untreated. As discussed above the latter specifically targets the counterfactual estimand under no treatment. Finally, we considered two approaches for estimating the performance of the models in the test set: a na\"{i}ve estimate of the MSE using observed outcome values, i.e., 
$$\widehat{\psi}_{Na\ddot{i}ve}(a) = \frac{1}{n_{test}} \sum_{i=1}^n I(D_{test,i} = 1) (Y_i - \widehat{\mu}(X_i))^2,$$ 
and the inverse-probability weighted estimator $\widehat{\psi}_{IPW}(a)$ from section \ref{sec:performance}. For the latter, we estimate a correctly specified logistic regression model for $e_a(X)$, i.e. $e_a(X) = \operatorname{expit}(\alpha_0 + \alpha_1 X)$, in the test set to estimate the weights. Lastly, we also calculated the ``true'' MSE under intervention to withold treatment by generating test data under same process as above but setting the treatment $A$ to $a = 0$ for everyone and then averaging across simulations.

\begin{table}[t]
    \centering
    \caption{Simulation results for experiment 1 comparing the performance of OLS and WLS models using the na\"{i}ve and inverse probability weighting (IPW) estimators of the MSE.}
    \begin{threeparttable}
        \begin{tabular}{p{3cm}R{1.25cm}R{1.25cm}R{1.25cm}}
        \toprule
        Model $\mu(X)$ & Na\"{i}ve & IPW & Truth  \\
        \midrule
        Correct & & & \\
        \hspace{1em}OLS & 2.9 & 3.6 & 3.6\\
        \hspace{1em}WLS & 5.5 & 1.0 & 1.0\\
        \addlinespace[0.25em]
        Misspecified & & & \\
        \hspace{1em}OLS & 16.8 & 17.5 & 17.5\\
        \hspace{1em}WLS & 19.5 & 15.0 & 15.0\\
        \bottomrule
        \end{tabular}
        \begin{tablenotes}
        \item \noindent Correct and misspecified refers to the specification of the  prediction model $\mu(X)$. OLS = model estimation using ordinary least squares regression (unweighted); WLS = model estimation using weighted least squares regression with weights equal to the inverse probability of being untreated. Results were averaged over 10,000 simulations. The true counterfactual MSE was obtained using numerical methods. 
        \end{tablenotes}
        \end{threeparttable}
    
\end{table}

Table 1 shows the results of the experiment based on 10,000 simulations. In general, correctly specified models yielded smaller average MSE than misspecified models. Comparing the performance of OLS and WLS estimation, when using $\widehat{\psi}_{Na\ddot{i}ve}(a)$, the na\"{i}ve estimator of the MSE, OLS seemed to produce better predictions than WLS when correctly specified (average MSE of 2.9 vs. 5.5) as well as when misspecified (average MSE of 16.8 vs. 19.5). In contrast, when using $\widehat{\psi}_{IPW}(a)$, the inverse-probability weighted estimate of the MSE, WLS performed better than OLS both when the model was correctly specified (average MSE of 1.0 vs. 3.6) and when misspecified (average MSE of 15.0 vs. 17.5). For reference, in the last column we show the true counterfactual MSE that would be obtained if one had access to the potential outcomes (obtained via numerical methods). We found that the average of the inverse probability weighted estimator across the simulations was equivalent to this quantity for all specifications and for both OLS and WLS estimation. This suggests that only the modified estimators of model performance in section \ref{sec:performance} are able to accurately estimate the counterfactual performance of the model. Indeed, under this data generation process, if one were to use the na\"{i}ve estimator one might erroneously conclude that the OLS model is the better choice.

\subsection{Experiment 2}

In the previous experiment, we assumed the nuisance models for the MSE were correctly specified. We now consider estimation of model performance in the more likely case that nuisance models are misspecified. Using the results from Section \ref{sec:auc} in the Appendix, we also estimate the area under the receiver operating characteristics curve (AUC). We simulated treatment initiation over follow up $A$ based on the logistic model $\operatorname{Pr}[A=1 \mid X]=\operatorname{expit}(0.5 - 2 X_1 + 3 X_1^2 + 2 X_2 - X_3)$, where $X$ is now a vector of predictors drawn from a 3-dimensional multivariate normal with mean vector $\mu = (0.2, 0, 0.5)$ and covariance matrix $\Sigma = \text{diag}(0.2, 0.2, 0.2)$. This resulted in expected treatment initiation over follow up of 55\%. We also simulated a binary outcome from a Bernoulli distribution with mean $\operatorname{expit}(0.2 + 3 X_1 - 2 X_1^2 + 2 X_2 + X_3 - 2 A)$, implying an average probability of the outcome of 66\% among the untreated and 32\% among treated. Again, we set the total sample size to 1000 and randomly split the data in a 1:1 ratio into a training and a test set. 
\begin{align*}
    X & \sim \text{MVN}(\mu, \Sigma) \\
    A & \sim \text{Bernoulli}\left\{\text{expit}\left(0.5 - 2 X_1 + 3 X_1^2 + 2 X_2 - X_3\right)\right\} \\
    Y & \sim \text{Bernoulli}\left\{\text{expit}\left(0.2 + 3 X_1 - 2 X_1^2 + 2 X_2 + X_3 - 2 A\right)\right\}
\end{align*}

Our prediction model was a main effects logistic regression model estimated in the training data, i.e. $\mu\left(X^*\right) = \operatorname{expit}(\beta_0 + \beta_1 X_1 + \beta_2 X_2 + \beta_3 X_3)$. This model was misspecified with respect to the true data generating process. We assessed the counterfactual performance of the model in an untreated population using the AUC and the MSE, which for a binary outcome is equivalent to the Brier score \cite{brier_verification_1950}. In general, positing a parametric model for $h_0(X)=\mathrm{E}[(Y-\mu\left(X^*\right))^2 \mid X, A=0]$ may be difficult as the outcome is a squared difference. For binary outcomes, however, expanding the square shows that to estimate $h_0(X)$ it is enough to estimate $\operatorname{Pr}[Y=1 \mid X, A=0]$, which is the approach we used. To determine the effect of the specification of nuisance models $e_a(X)$ and $h_a(X)$ on performance estimates, we compared four estimators of AUC and MSE (${\psi}_{Na\ddot{i}ve}(0)$, ${\psi}_{IPW}(0)$, ${\psi}_{CL}(0)$, and ${\psi}_{DR}(0)$) using different combinations of correctly specified and misspecified models for $e_a(X)$ and $h_a(X)$:
\begin{enumerate}
    \item Correct $e_a(X)$ - main effects logistic regression model with linear and quadratic terms.
    \item Misspecified $e_a(X)$ - main effects logistic regression model with linear terms only terms.
    \item Correct $h_a(X)$ - main effects logistic regression model with linear and quadratic terms.
    \item Misspecified $h_a(X)$ - main effects logistic regression model with linear terms only terms.
\end{enumerate}
Finally, we also considered using more flexible estimation techniques for nuisance terms $e_a(X)$ and $h_a(X)$. Specifically, we estimate generalized additive models for both using the \texttt{mgcv} package in $\mathrm{R}$ entering all covariates as splines using the default options in the \texttt{gam} function.

\begin{table}[t]
    \centering
    \footnotesize
    \caption{Simulation results for experiment 2 comparing the performance of the na\"{i}ve, conditional loss (CL), inverse-probability weighting (IPW), and doubly robust (DR) estimators of the MSE and AUC under correct and misspecified nuisance models.}
 
\begin{threeparttable}
    \begin{tabular}{lcccccccc}
    \toprule
    \multicolumn{1}{c}{ } & \multicolumn{4}{c}{MSE} & \multicolumn{4}{c}{AUC} \\
    \cmidrule(l{3pt}r{3pt}){2-5} \cmidrule(l{3pt}r{3pt}){6-9}
    Estimator & Mean & $\sqrt{n}\times\text{SD}$ & $\sqrt{n}\times\text{Bias}$ & Percent & Mean & $\sqrt{n}\times\text{SD}$ & $\sqrt{n}\times\text{Bias}$ & Percent\\
    \midrule
    Na\"{i}ve & 0.207 & 0.176 & -0.140 & -2.1 & 0.742 & 0.491 & -1.335 & -5.4\\
    \addlinespace[0.3em]
    \multicolumn{9}{l}{Correct}\\
    \hspace{1em}CL & 0.212 & 0.333 & 0.015 & 0.2 & 0.783 & 0.767 & -0.045 & \vphantom{1} -0.2\\
    \hspace{1em}IPW & 0.212 & 0.517 & 0.011 & 0.2 & 0.782 & 1.258 & -0.062 & \vphantom{1} -0.3\\
    \hspace{1em}DR & 0.211 & 0.454 & 0.000 & 0.0 & 0.783 & 1.192 & -0.028 & -0.1\\
    \addlinespace[0.3em]
    \multicolumn{9}{l}{$e_a(X)$ misspecified}\\
    \hspace{1em}CL & 0.212 & 0.333 & 0.015 & 0.2 & 0.783 & 0.767 & -0.045 & -0.2\\
    \hspace{1em}IPW & 0.221 & 0.358 & 0.316 & 4.7 & 0.762 & 0.876 & -0.699 & -2.8\\
    \hspace{1em}DR & 0.212 & 0.349 & 0.016 & 0.2 & 0.782 & 0.841 & -0.066 & -0.3\\
    \addlinespace[0.3em]
    \multicolumn{9}{l}{$h_a(X)$ misspecified}\\
    \hspace{1em}CL & 0.217 & 0.356 & 0.194 & 2.9 & 0.777 & 0.803 & -0.224 & -0.9\\
    \hspace{1em}IPW & 0.212 & 0.517 & 0.011 & 0.2 & 0.782 & 1.258 & -0.062 & -0.3\\
    \hspace{1em}DR & 0.211 & 0.625 & 0.001 & 0.0 & 0.783 & 1.317 & -0.024 & -0.1\\
    \addlinespace[0.3em]
    \multicolumn{9}{l}{Both misspecified}\\
    \hspace{1em}CL gam & 0.213 & 0.348 & 0.052 & 0.8 & 0.782 & 0.800 & -0.063 & -0.3\\
    \hspace{1em}IPW gam & 0.214 & 0.422 & 0.075 & 1.1 & 0.778 & 1.032 & -0.181 & -0.7\\
    \hspace{1em}DR gam & 0.211 & 0.403 & 0.010 & 0.1 & 0.784 & 0.966 & -0.021 & -0.1\\
    Truth & 0.211 & &  &  & 0.784 & &  & \\
    \bottomrule
    \end{tabular}
    \begin{tablenotes}
    \item Average of estimates, estimated bias, estimated standard deviation (SD), and estimated relative bias for the na\"{i}ve empirical, weighting (IPW), conditional loss (CL), and doubly robust (DR) estimators. $n$ is the number of observations in the test set. Here, $h_a(X)$ is a model for $\operatorname{Pr}[Y=1 \mid X, A=a]$ and $e_a(X)$ denotes a model for $\Pr[A = a|X]$. Relative bias is calculated as $(\text{estimator} -\text{truth})/\text{truth}$. Correct and Misspecified refer to the nuisance models, $e_a(X)$ or $h_a(X)$ or both. In the final rows, gam indicates that a generalized additive model was used to estimate nuisance models. Results were averaged over 10,000 simulations.
    \end{tablenotes}
    \end{threeparttable}
\end{table}

Table 2 shows the results from experiment 2. As in the previous experiment, the na\"{i}ve empirical estimators of the AUC and MSE were biased relative to the true counterfactual values with a relative bias of $-2.1\%$ and $-5.4\%$ respectively. When all models were correctly specified, the weighting, conditional loss, and doubly robust estimators were all unbiased (absolute relative bias between 0.2\% to 0.3\%). When $\widehat e_a(X)$ was misspecified, the weighting estimator was biased (relative bias of 4.7\% and -2.8\%) but the conditional loss and doubly robust estimator were unbiased (absolute relative bias of 0.2\% to 0.3\%). Under misspecification of $h_a(X)$, the conditional loss estimator was biased (relative bias of 2.9\% and -0.9\%), but the weighting estimator and the doubly robust estimator were unbiased (absolute relative bias of 0.0\% to 0.3\%). When both models $e_a(X)$ and $h_a(X)$ were misspecified all estimators, including the doubly robust estimator, were biased. Finally, when a more flexible generalized additive model was used to estimate both $e_a(X)$ and $h_a(X)$, the doubly robust estimator was unbiased (absolute relative bias of 0.1\%). Across all scenarios, the weighting estimator generally had the largest standard errors and the conditional loss estimator had the smallest standard errors.

\section{Application to prediction of statin-na\"{i}ve risk} \label{sec:results}

We apply the proposed methods to evaluate the performance of two counterfactual prediction models targeting the statin-na\"{i}ve risk of cardiovascular disease: that is the risk in the same population if, contrary to fact, statins had been withheld. We compare one model that was explicitly tailored for the counterfactual estimand of interest and a second that was not. 

\subsection{Study design and data}
The Multi-Ethnic Study on Atherosclerosis (MESA) study is a population-based sample of 6,814 men and women aged 45 to 84 drawn from six communities (Baltimore; Chicago; Forsyth County, North Carolina; Los Angeles; New York; and St. Paul, Minnesota) in the United States between 2000 and 2002. The design, sampling procedures, and collection methods of the study have been described previously \cite{bild_multi-ethnic_2002}. Study teams conducted five examination visits between 2000 and 2011 in 18 to 24 month intervals focused on the prevalence, correlates, and progression of subclinical cardiovascular disease. These examinations included assessments of lipid-lowering medication use (primarily statins), as well as assessments of cardiovascular risk factors such as systolic blood pressure, serum cholesterol, cigarette smoking, height, weight, and diabetes. 

In a previous analysis, we used data from the MESA study to emulate a trial comparing continuous statin use versus no statins and benchmarked our results against those from published randomized trials. To construct a model of the statin-na\"{i}ve risk, we then emulated a single arm trial in which no one started statins over a 10-year follow up period. To determine trial eligibility, we followed the AHA guidelines \cite{grundy_scott_m_2018_2019} on statin use which stipulate that patients aged 40 to 75 with serum LDL cholesterol levels between 70 mg/dL and 190 mg/dL and no history of cardiovascular disease should initiate statins if their (statin-na\"{i}ve) risk exceeds 7.5\%. Therefore, we considered MESA participants who completed the baseline examination, had no previous history of statin use, no history of cardiovascular disease, and who met the criteria described in the guidelines (excluding the risk threshold) as eligible to participate in the trial. The primary endpoint was time to atherosclerotic cardiovascular disease (ASCVD), defined as nonfatal myocardial infarction, coronary heart disease death, or ischemic stroke. 

Follow up began at the second examination cycle to enable a ``wash out'' period for statin use and to ensure adequate pre-treatment covariates to control confounding (some were taken from the first cycle and others from the second). In the original analysis, we constructed a sequence of nested trials starting at each exam, however here for simplicity we limited our attention to the first trial. We used the questionnaire in examinations three through five to determine statin initiation over the follow up period. Because the exact timing of statin initiation was not known with precision, we estimated it by drawing a random month between the current and previous examinations (sensitivity analyses conducted in original study).

Of the 6,814 MESA participants who completed the baseline examination, 4,149 met the eligibility criteria for our trial emulation. There were 288 ASCVD events and 190 non-ASCVD deaths. For simplicity, here we dropped those lost to follow up and who first have competing events. For model training and evaluation, we further randomly split the dataset into training and test sets of equal size. 

\subsection{Model estimation and performance}

We compared two prediction models: one that was explicitly tailored to the statin-na\"{i}ve risk and a second that was not. Both models used the same regression specification with main effects of baseline predictors commonly used in cardiovascular risk prediction: age, sex, smoking status, diabetes history, systolic blood pressure, anti-hypertensive medication use and total and HDL serum cholesterol levels.

We tailored the first model for the statin-na\"{i}ve risk using inverse probability of censoring weights. In the emulated single arm trial, statin initiation can be viewed as ``non-adherence'' which can be adjusted for by inverse probability weighting, therefore we censored participants when they initiated statins. To calculate the stablized weights, we estimated two logistic regression models: one for the probability of remaining untreated given past covariate history (denominator model) and one for probability of remaining untreated given the selected baseline predictors (numerator model). The list of covariates in the weight models are given in section \ref{sec:covs} of the Appendix. To create a prediction model for the statin-na\"{i}ve risk, we used the estimated weights to fit a weighted logistic regression model conditional on the baseline predictors of interest. 

For comparison, we estimate a second traditional (factual) prediction model by regressing the observed ASCVD event indicator on the same set of baseline predictors, but ignoring treatment initiation over the follow up period. This approach targets the ``natural course'' risk (i.e., the risk under the statin initiation policies that prevailed at the time of the study) rather than the statin-na\"{i}ve risk. We estimate the model using standard logistic regression based on maximum likelihood.

To assess the performance of the models, we estimated the na\"{i}ve and counterfactual MSE in the test set. For the latter we used the conditional loss, inverse probability weighting, and doubly robust estimators of the MSE. Models for the initiation of treatment $e_a(X)$ and for the conditional loss $h_a(X)$ were implemented as main effects logistic regression models. As in the simulation example, to estimate the conditional loss it is sufficient to model the probability of the outcome alone. To quantify uncertainty, we used the non-parametric bootstrap with 1000 bootstrap replicates.

\begin{table}[t]
    \centering
    \caption{Estimated MSE in a statin-na\"{i}ve population for two prediction models using emulated trial data from MESA.}
    \begin{threeparttable}
        \begin{tabular}{lcccc}
        \toprule
        \multicolumn{1}{c}{ } & \multicolumn{2}{c}{MSE} & \multicolumn{2}{c}{AUC} \\
        \cmidrule(l{3pt}r{3pt}){2-3} \cmidrule(l{3pt}r{3pt}){4-5}
        Estimator & Logit & Weighted Logit & Logit & Weighted Logit\\
        \midrule
        Na\"{i}ve & 0.069 & 0.072 & 0.710 & 0.708\\
         & (0.003) & (0.003) & (0.013) & (0.014)\\
        CL & 0.086 & 0.085 & 0.719 & 0.727\\
         & (0.005) & (0.004) & (0.015) & (0.015)\\
        IPW & 0.109 & 0.099 & 0.753 & 0.778\\
         & (0.013) & (0.009) & (0.025) & (0.029)\\
        DR & 0.090 & 0.087 & 0.740 & 0.751\\
         & (0.006) & (0.005) & (0.023) & (0.023)\\
        \bottomrule
        \end{tabular}
        \centering
        \begin{tablenotes}[flushleft]
        \item The columns refer to the posited prediction model: Logit is an (unweighted) logistic regression model and weighted logit is a logistic regression model with inverse probability weights for remaining statin-free. The rows are the model performance estimates of the MSE and AUC. Na\"{i}ve is the empirical estimator using factual outcomes ($\widehat{\psi}_{Na\ddot{i}ve}(0)$), CL is the conditional loss estimator ($\widehat{\psi}_{CL}(0)$), IPW is the inverse probability weighting estimator ($\widehat{\psi}_{IPW}(0)$), DR is the doubly-robust estimator ($\widehat{\psi}_{DR}(0)$). Standard error estimates are shown in parentheses obtained via 1000 bootstrap replicates.
        \end{tablenotes}
        \end{threeparttable}
\end{table}

\subsection{Results}

Table 3 shows estimates of the AUC and MSE and the associated standard errors in a hypothetical statin-na\"{i}ve population for both prediction models using the na\"{i}ve empirical, conditional loss, weighting, and doubly robust estimators. Across both measures, the conditional loss, weighting, and doubly robust estimators yielded estimates that were worse (30-50\% greater MSE, 3-10\% lower AUC) than those of the na\"{i}ve empirical estimator, suggesting performance of both models in statin-na\"{i}ve population is worse than in the source population. Of the three estimators of the statin-na\"{i}ve performance, the weighting estimator had greater standard errors than the doubly robust estimator (by 10-100\%) as well as the conditional loss estimator (by 60-160\%). Consistent with the first simulation experiment, the inverse probability weighted logistic model, which was tailored to target the statin-na\"{i}ve risk, performed worse in the source population, but had lower MSE and higher AUC values in the counterfactual statin-na\"{i}ve population. There were sizeable differences (2-3 standard errors) in estimates of the MSE and AUC in the counterfactual statin-na\"{i}ve population across the proposed CL, IPW, and DR estimators. All three should give equivalent results in expectation if models for the nuisance functions $h_a(X^*)$ and $e_a(X^*)$ are correctly specified. Their divergence suggests at least one (but possibly both) of the nuisance functions may be misspecified. Finally, drawing on the results in section \ref{sec:calib} in the appendix, we estimate the counterfactual calibration of both prediction models in a statin-na\"{i}ve population using a weighted loess estimator where the weights are based on inverse probability of remaining statin free. Figure \ref{fig:calib} shows the results. As expected, the weighted prediction model, which was tailored to target the statin-na\"{i}ve risk, was better calibrated than the unweighted model, which generally underestimated the counterfactual risk.

\begin{figure}[p]
    \centering
    \includegraphics{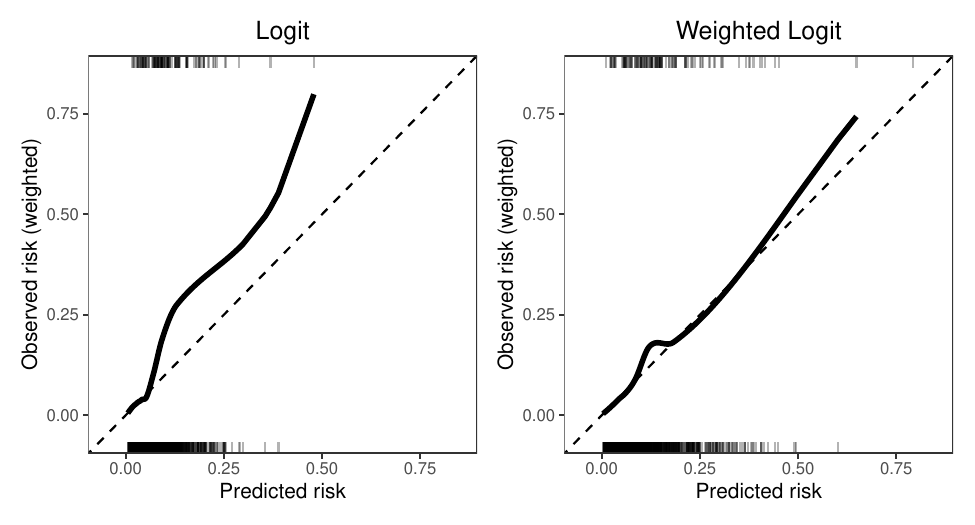}
    \caption{Risk calibration curves for counterfactual prediction models estimated using logistic regression with and with out inverse probability weights for statin initiation. The rug plot shows distribution of risk predictions among those who develop ASCVD (top) and those who don't (bottom). The black curve is a local-regression smoothed and inverse probability weighted estimate of the calibration curve targeting the statin-na\"{i}ve risk.\label{fig:calib}}
\end{figure}

\section{Discussion} \label{sec:discussion}

Many practical prediction problems call for the development and evaluation of counterfactual prediction models, for example, when the treatment distribution changes between the time training data are obtained and when the model will be deployed, or when predictions are meant to inform decisions about treatment initiation. In this paper, we considered the setting when the counterfactual prediction model has to be developed using observational training data. We described how to estimate counterfactual prediction models and the conditions necessary to identify these models. Separately, we also discussed how to adjust common measures of model performance to estimate the performance of counterfactual prediction model. Importantly, our results for performance measures are valid even when the prediction model is misspecified. A key insight was that performance measures that have not been tailored for counterfactual prediction will be biased. We also showed that performance under a hypothetical intervention can be assessed even if the prediction model itself is misspecified. We proposed estimators for these performance measures based on modeling the conditional loss, the probability of treatment, and a doubly robust estimator that can be used with data-adaptive estimators of either nuisance function. 

We focused on measures of performance under a particular treatment policy. However, prediction models may instead target the estimation of conditional treatment \textit{effects}, that is, comparisons between treatment regimes such as $\tau(X^*) = \E[Y^1 - Y^0 \mid X^*]$. In some cases, effects may be easier to communicate to end users or may be desirable to evaluate benefits versus harms of treatment initiation \cite{kent_predictive_2020}. However, absolute means and risks are common outputs of existing prediction models. Several authors have proposed model performance measures for conditional average treatment effects which are identifiable under similar assumptions to our own \cite{schuler_comparison_2018, rolling2014model, xu_calibration_2022, van2003unified, alaa_validating_2019}. From an estimation standpoint, methods for targeting the conditional average treatment effect and their performance have to balance the estimation of the conditional risk function for the outcome under different treatment levels with the estimation of the treatment effect function, with optimality depending on the relative smoothness of these functions \cite{kennedy_towards_2022}.

Throughout, we did not assume that the covariates needed to satisfy the exchangeability assumption were the same covariates used in the prediction model. This is an important aspect of our work because in practice predictors are often chosen based on their availability in a clinical setting rather than what would be optimal from a causal (or even predictive) perspective \cite{steyerberg_clinical_2019, coston_counterfactual_2021}. Moreover, our methods reflect the fact that confounding is a problem that needs to be adjusted for in the setting where the model is developed, whereas predictions often need to be optimized for the setting where the model will be applied (although as discussed below allowing the covariate distribution to differ between settings can introduce further subtleties).

One limitation of our approach is that it requires that a set of covariates sufficient to ensure exchanageability can be identified at the time of estimating the model. Violations of this exchangeability condition can be examined in sensitivity analyses \cite{robins_sensitivity_2000, steingrimsson_sensitivity_2023}, for example, to explore how violations of this assumption might affect estimates of model performance. Further work may also examine identification of counterfactual prediction models and their performance under alternative identifiability conditions.

In this work, we assumed that the distribution of predictors is the same in the training and deployment setting. In many cases, however, the covariate distributions are also likely to differ between settings \cite{bickel_discriminative_2009, sugiyama_covariate_2007}. Like differences in treatment initiation, differences in covariate distributions may cause the performance of the prediction model to degrade, particularly when it is misspecified. Methods for transporting prediction models from source to target populations which relate to our own have previously been proposed \cite{sugiyama_covariate_2007,bickel_discriminative_2007,sugiyama2012machine, steingrimsson_transporting_2023, li_estimating_2022, morrison_robust_2022} as have methods for transporting conditional average effects from trials to target populations with different covariate distributions \cite{mehrotra_transporting_2021, seamans_generalizability_2021, robertson_estimating_2021, robertson_regression-based_2023}. In future work, our results could be extended to allow for both differences in the distribution of treatment $f(A|X)$ and the covariates $f(X)$ between training and deployment settings to occur at the same time.

\section{Funding statement}
This work was supported by Patient-Centered Outcomes Research Institute (PCORI) award ME-2021C2-22365, National Library of Medicine (NLM) award R01LM013616, and National Heart, Lung, and Blood Institute (NHLBI) award R01HL136708 and training grant T32HL98048-10. The content is solely the responsibility of the authors and does not necessarily represent the official views of PCORI, PCORI's Board of Governors, PCORI's Methodology Committee, NLM, or NHLBI.

\clearpage

\printbibliography

\clearpage

\begin{appendix}
    \renewcommand{\thefigure}{A\arabic{figure}}
    \setcounter{figure}{0}
    
    \renewcommand{\thetable}{A\arabic{table}}
    \setcounter{table}{0}
    
    \renewcommand{\theequation}{A\arabic{equation}}
    \setcounter{equation}{0}

    \renewcommand{\thesection}{\Alph{section}}

    \newpage

\section{Time-fixed treatments} \label{sec:appendixa}

    \subsection{Identification of counterfactual prediction model estimands}\label{sec:model_proof}
    Our goal is to build a model that targets the expected potential outcome under a hypothetical intervention, e.g. the  model 
    $$\E[Y^a \mid X^*] = \mu_{a}(X^*).$$
    However, we do not observe $Y^a$ for all individuals. Nonetheless, as the following theorem shows, targets like $\E[Y^a \mid X^*]$ are identifiable from the observed data $(X, A, Y)$ under certain assumptions. 
    
    \begin{theorem}
     Under conditions A1-A3 in section \ref{sec:identifiability}, $\E[Y^a \mid X^*]$ is identified by the observed data functionals
    \begin{equation}
        \E[Y^a \mid X^*] = \E[\E[Y \mid X, A = a, D_{train} = 1] \mid X^*, D_{train} = 1]
    \end{equation}
    and
    \begin{equation}
        \E[Y^a \mid X^*] = \E\left[\frac{I(A = a)}{\Pr[A = a \mid X, D_{train} = 1]} Y \mid X^*, D_{train} = 1\right]
    \end{equation}
    in which case we can build a model for $\E[Y^a \mid X^*]$ by targeting either estimand in the training dataset.
    \end{theorem}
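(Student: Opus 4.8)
The plan is to establish each of the two identifying functionals separately, in both cases reducing the counterfactual quantity $\E[Y^a \mid X^*]$ to an observed-data conditional expectation by the same causal argument: first enlarging the conditioning set from $X^*$ to the full confounder set $X$ via iterated expectations, then removing the potential outcome using exchangeability and consistency. Throughout I would treat the training indicator $D_{train}$ as the result of a random split, so that $D_{train}$ is independent of $(X, A, Y, Y^a)$; this is what lets me carry the event $\{D_{train} = 1\}$ through every conditional expectation and, at the end, identify $\E[\,\cdot \mid X^*, D_{train} = 1]$ with $\E[Y^a \mid X^*]$. Because the split is independent of the underlying variables, conditions A1--A3 continue to hold within the training subpopulation, which is all the argument needs.

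For the standardization expression, first I would apply the tower property to write $\E[Y^a \mid X^*, D_{train}=1] = \E\big[\E[Y^a \mid X, D_{train}=1] \mid X^*, D_{train}=1\big]$, using that $X^*$ is a subvector of $X$. Next I would invoke conditional exchangeability (A2) to insert the event $\{A=a\}$ into the inner expectation, giving $\E[Y^a \mid X, D_{train}=1] = \E[Y^a \mid X, A=a, D_{train}=1]$; positivity (A3) guarantees this conditioning event has positive probability, so the expression is well defined. Finally, consistency (A1) replaces $Y^a$ by the observed $Y$ on the event $\{A=a\}$, yielding $\E[Y \mid X, A=a, D_{train}=1]$ and hence the first displayed equation.

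For the inverse-probability-weighted expression, I would instead work from the inside out on the conditional expectation given $X$. By consistency, $I(A=a) Y = I(A=a) Y^a$, so the weighted summand equals $\frac{I(A=a)}{\Pr[A=a \mid X, D_{train}=1]} Y^a$. Exchangeability makes $I(A=a)$ and $Y^a$ conditionally independent given $(X, D_{train}=1)$, so taking the conditional expectation given $(X, D_{train}=1)$ factors the weight, and the ratio $\E[I(A=a)\mid X, D_{train}=1]/\Pr[A=a\mid X, D_{train}=1]$ collapses to one (positivity again ensuring the denominator is nonzero). This leaves $\E[Y^a \mid X, D_{train}=1]$, and a final application of the tower property pushes the conditioning down from $X$ to $X^*$, matching the second displayed equation.

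The routine calculations---the tower property and the factorization of the weight---are standard; the part that warrants care is the bookkeeping around $D_{train}$. The main obstacle is the justification that A1--A3, stated unconditionally in Section \ref{sec:identifiability}, transfer to the conditional-on-$\{D_{train}=1\}$ world, and that the final replacement of $\E[\,\cdot \mid X^*, D_{train}=1]$ by $\E[Y^a \mid X^*]$ is legitimate; both follow from the random-splitting assumption but should be stated explicitly rather than left implicit.
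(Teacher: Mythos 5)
Your proof is correct, and for the first (standardization) functional it is step-for-step the paper's argument: condition on $\{D_{train}=1\}$ via the random split, apply the tower property to pass from $X^*$ to $X$, insert $\{A=a\}$ by exchangeability (A2), and replace $Y^a$ with $Y$ by consistency (A1). For the second (weighted) functional your route differs slightly in structure. The paper does not re-derive it from the counterfactual: it starts from the already-established g-formula expression and rewrites the inner regression, using the identity $\E[Y \mid X, A=a, D_{train}=1] = \E\bigl[\tfrac{I(A=a)}{\Pr[A=a \mid X, D_{train}=1]}\,Y \mid X, D_{train}=1\bigr]$ (``definition of conditional expectation''), then un-nests by the tower property. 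You instead identify the weighted expression directly: consistency gives $I(A=a)Y = I(A=a)Y^a$, exchangeability factors $\E[I(A=a)Y^a \mid X, D_{train}=1]$ into $\Pr[A=a \mid X, D_{train}=1]\cdot\E[Y^a \mid X, D_{train}=1]$ so the weight cancels, and the tower property finishes. The two arguments use the same ingredients and the same core identity; yours makes explicit that the IPW identification stands on its own rather than being parasitic on the g-formula, at the cost of redoing the causal steps, while the paper's is more economical by reusing the first derivation. Your explicit bookkeeping around $D_{train}$ --- stating that the random split makes $D_{train}$ independent of $(X,A,Y,Y^a)$, so A1--A3 transfer to the training subpopulation and $\E[Y^a \mid X^*, D_{train}=1] = \E[Y^a \mid X^*]$ --- is more careful than the paper, which compresses this into the phrase ``random sampling of the training set'' (and, incidentally, mislabels exchangeability as A1 and consistency as A2 in its proof narration).
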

    
    \begin{proof}
        For the first representation we have 
        \begin{align*}
            \E[Y^a \mid X^*] & = \E[Y^a \mid X^*, D_{train} = 1] \\
            & = \E(\E[Y^a\mid X, D_{train} = 1] \mid X^*, D_{train} = 1) \\
            & = \E(\E[Y^a\mid X, A = a, D_{train} = 1] \mid X^*, D_{train} = 1) \\
            & = \E(\E[Y \mid X, A = a, D_{train} = 1] \mid X^*, D_{train} = 1) 
        \end{align*}
        where the first line follows from the random sampling of the training set, the second from the law of iterated expectations, the third from the exchangeability condition A1, and the fourth from the consistency condition A2. Recall that $X^*$ is a subset of $X$. For the second representation, we show that it is equivalent to the first 
        \begin{align*}
            \E[Y^a \mid X^*] &= \E(\E[Y \mid X, A = a, D_{train} = 1] \mid X^*, D_{train} = 1)  \\
            &= \E\left(\E\left[\frac{I(A = a)}{\Pr[A = a \mid X,D_{train} = 1]}Y \mid X,D_{train} = 1\right] \mid X^*, D_{train} = 1\right)\\
            &= \E\left[\frac{I(A = a)}{\Pr[A = a \mid X,D_{train} = 1]}Y \mid X^*, D_{train} = 1\right]
        \end{align*}
        where the second line follows from the definition of conditional expectation, and the last reverses the law of iterated expectations.
    \end{proof}

    \subsection{Identification of general loss functions}\label{sec:tf_proof}
    Many common model performance measures, such as the mean squared error, Brier score, and absolute error are special cases of a generic loss function $L(Y, \widehat{\mu}(X^*))$. To assess the performance of counterfactual predictions, we would like to estimate 
    $$\psi(a) = \E[L(Y^a, \widehat{\mu}(X^*))]$$
    where $Y^a$ is not observed for all individuals. The following theorem states that, under the conditions of section \ref{sec:identifiability}, $\psi(a)$ is identified using the observed data alone. 
    \begin{theorem}
        Under conditions A1-A3 in section \ref{sec:identifiability} and the time-fixed setup described in section \ref{sec:setup}, the expected loss is identified by the functionals
        \begin{equation}\label{eqn:app_cl_estimand}
            \psi(a) = \E\left(\E[L\{Y, \widehat{\mu}(X^*)\} \mid X, A=a, D_{test} = 1] \mid D_{test} = 1\right)
        \end{equation}
        and 
        \begin{equation}\label{eqn:app_ipw_estimand}
            \psi(a) = \E\left[\frac{I(A = a)}{\Pr[A = a \mid X, D_{test} = 1]}L\{Y, \widehat{\mu}(X^*)\} \mid D_{test} = 1\right]
        \end{equation}
        in the test set for generic counterfactual loss function $L(Y^{a}, \widehat{\mu}(X^*))$.
    \end{theorem}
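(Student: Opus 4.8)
The plan is to re-run the argument of the preceding theorem almost verbatim, since the only genuinely new feature is the appearance of the fitted model $\widehat{\mu}(X^*)$ inside a generic loss $L$. The crucial observation is that, because $\widehat{\mu}$ is estimated on the training data and the train/test split is random and independent, conditioning on $D_{test}=1$ lets me treat $\widehat{\mu}(X^*)$ as a fixed, measurable function of $X^*$, and hence of $X$ (recall $X^* \subseteq X$). Consequently $L\{Y^a, \widehat{\mu}(X^*)\}$ is simply a measurable function of the pair $(Y^a, X)$, and none of the steps below use the squared-error form, so the argument goes through for any $L$.

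For the first (conditional-loss) representation I would start from $\psi(a) = \E[L\{Y^a,\widehat{\mu}(X^*)\} \mid D_{test}=1]$ and insert an inner conditioning on $X$ by the law of iterated expectations. I then introduce $A=a$ into the inner conditioning using exchangeability (A2): since $Y^a \perp\!\!\!\perp A \mid X$, any measurable $g(Y^a, X)$ also satisfies $g(Y^a,X) \perp\!\!\!\perp A \mid X$, so $\E[g(Y^a,X)\mid X, A=a] = \E[g(Y^a,X)\mid X]$; this is exactly where it matters that $\widehat{\mu}(X^*)$ is a function of $X$ and does not disturb the conditional independence. Finally, on the event $A=a$ consistency (A1) gives $Y^a = Y$, so that $L\{Y^a,\widehat{\mu}(X^*)\} = L\{Y,\widehat{\mu}(X^*)\}$ inside the conditional expectation, yielding \eqref{eqn:app_cl_estimand}.

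For the second (weighting) representation I would show that it equals the first, exactly as in the proof above. Rewriting the inner term through the identity
\begin{equation*}
    \E\big[L\{Y,\widehat{\mu}(X^*)\}\mid X, A=a, D_{test}=1\big] = \E\Big[\tfrac{I(A=a)}{\Pr[A=a\mid X, D_{test}=1]}L\{Y,\widehat{\mu}(X^*)\} \,\Big|\, X, D_{test}=1\Big],
\end{equation*}
which holds by the definition of conditional expectation and is well defined precisely because positivity (A3) keeps the denominator bounded away from zero on the support of $X$, and then collapsing the outer iterated expectation reproduces \eqref{eqn:app_ipw_estimand}.

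The main obstacle is conceptual rather than computational: it is the bookkeeping around the estimated $\widehat{\mu}$. I must be explicit that identification is stated for a given fitted model, so that conditioning on the test set freezes $\widehat{\mu}(X^*)$ into a fixed function of $X$; this is what both licenses the exchangeability step (a function of $X$ cannot break $Y^a \perp\!\!\!\perp A \mid X$) and makes the consistency substitution legitimate. Everything after that is a routine repetition of the earlier derivation, with positivity (A3) doing the same double duty, namely ensuring the conditional expectations given $A=a$ are defined on the support of $X$ and that the inverse-probability weights are finite.
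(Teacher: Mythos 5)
Your proof is correct and follows essentially the same route as the paper's: random test-set sampling, iterated expectations, exchangeability to insert $A=a$, consistency to replace $Y^a$ with $Y$, and then the standard conditional-expectation identity (justified by positivity) to convert the first representation into the weighted one. Your added care in noting that the independent train/test split freezes $\widehat{\mu}(X^*)$ into a fixed measurable function of $X$ --- so that exchangeability and consistency apply to $L\{Y^a,\widehat{\mu}(X^*)\}$ as a function of $(Y^a, X)$ --- is a point the paper leaves implicit, but it does not change the argument.
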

    
    \begin{proof}
        For the first representation we have 
        \begin{align*}
            \psi(a) &= \E[L\{Y^{a}, \widehat{\mu}(X^*)\}] \\
            & = \E[L\{Y^{a}, \widehat{\mu}(X^*)\}\mid D_{test} = 1] \\
            & = \E(\E[L\{Y^{a}, \widehat{\mu}(X^*)\}\mid X, D_{test} = 1] \mid D_{test} = 1) \\
            & = \E(\E[L\{Y^{a}, \widehat{\mu}(X^*)\}\mid X, A = a, D_{test} = 1] \mid D_{test} = 1) \\
            & = \E(\E[L\{Y, \widehat{\mu}(X^*)\}\mid X, A = a, D_{test} = 1] \mid D_{test} = 1) 
        \end{align*}
        where the first line follows from the definition of $\psi(a)$, the second from random sampling of the test set, the third from the law of iterated expectations, the fourth from the exchangeability condition A1, and the fifth from the consistency condition A2. Recall that $X^*$ is a subset of $X$. For the second representation, we show that it is equivalent to the first 
        \begin{align*}
            \psi(a) &= \E(\E[L\{Y, \widehat{\mu}(X^*)\}\mid X,A = a, D_{test} = 1] \mid D_{test} = 1) \\
            &= \E\left(\E\left[\frac{I(A = a)}{\Pr[A = a \mid X,D_{test} = 1]}L\{Y, \widehat{\mu}(X^*)\} \mid X,D_{test} = 1\right] \mid D_{test} = 1\right)\\
            &= \E\left[\frac{I(A = a)}{\Pr[A = a \mid X,D_{test} = 1]}L\{Y, \widehat{\mu}(X^*)\} \mid D_{test} = 1\right]
        \end{align*}
        where the second line follows from the definition of conditional expectation, and the last reverses the law of iterated expectations.
    \end{proof}
    
    \subsection{Plug-in estimation for general loss functions}
    Using sample analogs for the identified expressions \ref{eqn:app_cl_estimand} and \ref{eqn:app_ipw_estimand}, we obtain two plug-in estimators for the expected loss for a generalized loss function
    \begin{equation*}
        \widehat{\psi}_{CL}(a) = \frac{1}{n_{test}}\sum_{i=1}^nI(D_{test, i} = 1)\widehat{h}_a(X_i)
    \end{equation*}
    and 
    \begin{equation*}
        \widehat{\psi}_{IPW}(a) = \frac{1}{n_{test}} \sum_{i=1}^n \frac{I(A_i = a, D_{test, i} = 1)}{ \widehat{e}_a(X_i)} L\{Y_i, \widehat{\mu}(X^*_i)\}
    \end{equation*}
    where $\widehat{h}_a(X)$ is an estimator for $\E[L\{Y, \widehat{\mu}(X^*)\}\mid X,A = a, D_{test} = 1]$ and $\widehat{e}_a(X)$ is an estimator for $\Pr[A = a \mid X,D_{test} = 1]$. Using the terminology in \textcite{morrison_robust_2022}, we call the first plug-in estimator the conditional loss estimator $ \widehat{\psi}_{CL}(a)$ and the second the inverse probability weighted estimator $\widehat{\psi}_{IPW}(a)$. 

    
    \subsection{Random and dynamic regimes}\label{sec:randomdynamic}
    Above we consider static interventions which set treatment $A$ to a particular value $a$. We might also consider interventions which probabilistically set $A$ based on a known density, possibly conditional on pre-treatment covariates, e.g. $f^*(A \mid X)$ where $f^*$ denotes an investigator specified density to contrast it with the observed density $f(A\mid X)$. For instance, instead of a counterfactual prediction if everyone or no one had been treated, we may be interested in the prediction if 20\% or 50\% were treated. We term such an intervention a \textit{random} or \textit{stochastic} intervention to contrast it with \textit{deterministic} interventions considered previously. Random interventions are closer to the counterfactual interventions of interest under dataset shift which may be approximated as probabilistic changes in treatment initiation due to changes in guidelines or prescribing patterns or the wider-availability. Denote by $g$ the intervention which assigns $A$ according to $f^*(A \mid X)$. For general loss functions, the expected loss under $g$, i.e. $L\{Y^{g}, \widehat{\mu}(X^*)\}$, is identified by the functionals
    \begin{equation}\label{eqn:rand_cl_estimand}
        \psi(g) = \E_{f^*(A|X)}\left(\E[L\{Y, \widehat{\mu}(X^*)\} \mid X, A, D_{test} = 1] \mid D_{test} = 1\right)
    \end{equation}
    and 
    \begin{equation}\label{eqn:rand_ipw_estimand}
        \psi(g) = \E\left[\frac{f^*(A|X)}{f(A \mid X, D_{test} = 1)}L\{Y, \widehat{\mu}(X^*)\} \mid D_{test} = 1\right]
    \end{equation}
    in the test set under Assumptions A1 to A3. The proof is the same given for static interventions above by recognizing that 
    \begin{align*}
        \E_{f^*(A|X)}&\left(\E[L\{Y^g, \widehat{\mu}(X^*)\} \mid X, A, D_{test} = 1] \mid D_{test} = 1\right) = \\
        &\qquad \sum_a\E[L\{Y^a, \widehat{\mu}(X^*)\} \mid X, A, D_{test} = 1]\operatorname{Pr}^*[A=a|X]
    \end{align*}
    where $\operatorname{Pr}^*[A=a|X]$ is the probability of receiving treatment $a \in \{0,1\}$ under the treatment assignment strategy $f^*(A|X)$.

    For plugin estimation for a binary treatment, the corresponding sample analogs are given by
    \begin{align}
        \widehat\psi_{CL}(g) = \frac{1}{n_{test}} \sum_{i=1}^n I(D_{test,i}=1) \left( \operatorname{Pr}^*[A=1|X_i] \widehat  h_{1}(X_i) + \operatorname{Pr}^*[A=0|X_i] \widehat  h_{0}(X_i) \right)
    \end{align}
    and 
    \begin{align}
        \begin{split}
        \widehat\psi_{IPW}(g) &= \frac{1}{n_{test}} \sum_{i=1}^n I(D_{test,i}=1)  L\{Y_i, \widehat{\mu}(X^*_i)\} \\
        & \qquad \qquad \times \left( I(A_i = 1)\frac{\operatorname{Pr}^*[A=1|X_i]}{ \widehat{e}_1(X_i)} + I(A_i = 0) \frac{\operatorname{Pr}^*[A=0|X_i]}{ \widehat{e}_0(X_i)} \right)
        \end{split}
    \end{align}
     where $\widehat{h}_a(X)$ is an estimator for $\E[L\{Y, \widehat{\mu}(X^*)\}\mid X,A = a, D_{test} = 1]$ and $\widehat{e}_a(X)$ is an estimator for $\Pr[A = a \mid X,D_{test} = 1]$. We use the same terminology as previous.

\newpage
\section{Time-varying treatments}\label{sec:timevarying}
\subsection{Set up}
Here we extend the set up of section \ref{sec:setup} in the case that treatment is time-varying over the follow up period. We now observe $n$ i.i.d. longitudinal samples $\{O_i\}_{i=1}^n$ from a source population. For each observation, let 
\[O =(\overline{X}_K, \overline{A}_K, Y_{K+1})\]
where $X_k$ is a vector of covariates, $A_k$ is an indicator of treatment, and $Y_k$ is an event indicator all measured at time $k$, where $k \in \{0,1,\ldots, K\}$. Overbars denote the full history of a variable, such that $\overline{X}_k = (X_0,\dots, X_k)$. Again we assume the goal is to build a prediction model for end of follow up outcome $Y_{K+1}$ conditional on baseline covariates $X^*$ which are now a subset of baseline covariates $X_0$, i.e. $X^* \subset X_0$. An example DAG for a two time point process is shown in Figure \ref{fig:dag2}.

We would like to assess the performance of the model in a counterfactual version of the source population in which a new treatment policy is implemented. As previously, $Y^a$ is the potential outcome under an intervention which sets treatment $A$ to $a$. For a sequence of time-varying treatments $\overline{A}_k$, we further define a \textit{treatment regime} as a collection of functions $\{g_k(\overline{a}_{k-1}, \overline{x}_k): k=0,\ldots, K\}$ for determining treatment assignment at each time $k$, possibly based on past treatment and covariate history. For a hypothetical treatment regime $g$, we would like to determine the performance of fitted model $\widehat{\mu}(X^*)$ under the new regime by estimating the expected loss 
$$\psi(g) = \E[L\{Y^g, \widehat{\mu}(X^*)\}]$$
for generalized loss function $L\{Y^g, \widehat{\mu}(X^*)\}$.

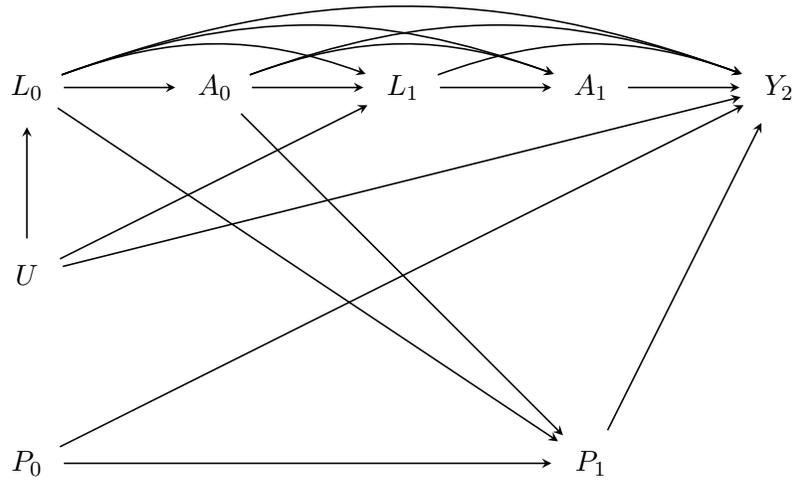
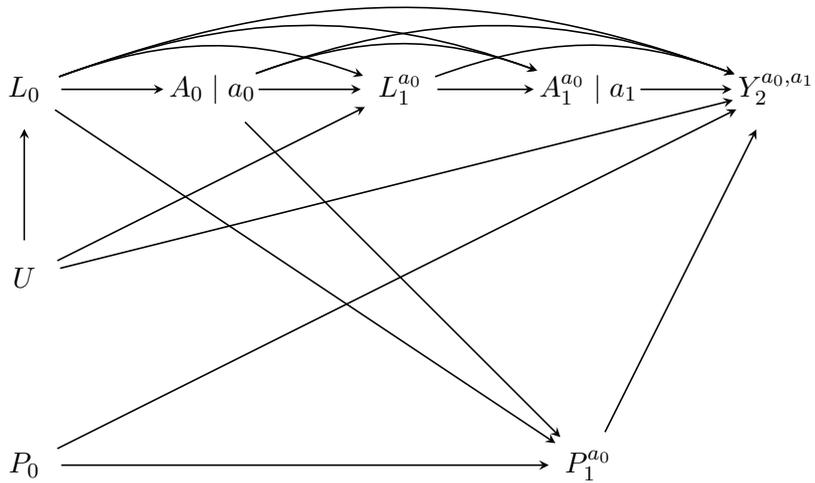
\begin{figure}[p]
    \centering
    \begin{subfigure}{\linewidth}
        \centering
        \begin{tikzpicture}[> = stealth, shorten > = 1pt, auto, node distance = 2.5cm, inner sep = 0pt,minimum size = 0.5pt, very thick]
            \tikzstyle{every state}=[
              draw = white,
              fill = white
            ]
            \node[state] (l0) {$L_{0}$};
            \node[state] (a0) [right of=l0] {$A_{0}$};
            \node[state] (l1) [right of=a0] {$L_1$};
            \node[state] (a1) [right of=l1] {$A_1$};
            \node[state] (y1) [right of=a1] {$Y_2$};
            \node[state] (u0) [below of=l0] {$U$};
            \node[state] (p0) [below of=u0] {$P_{0}$};
            \node[state] (b0) [right of=p0] {};
            \node[state] (b1) [right of=b0] {};
            \node[state] (p1) [right of=b1] {$P_{1}$};
            
            \path[->] (l0) edge node {} (a0);
            \path[->] (l0) edge node {} (p1);
            \path[->] (l0) edge [out=20, in=160, looseness=1] node {} (l1);
            \path[->] (l0) edge [out=20, in=160, looseness=1] node {} (a1);
            \path[->] (l0) edge [out=20, in=160, looseness=1] node {} (y1);
        
            \path[->] (a0) edge node {} (p1);
            \path[->] (a0) edge node {} (l1);
            \path[->] (a0) edge [out=20, in=160, looseness=1]  node {} (a1);
            \path[->] (a0) edge [out=20, in=160, looseness=1]  node {} (y1);
            
            \path[->] (p0) edge node {} (p1);
            \path[->] (p0) edge node {} (y1);
            
            \path[->] (p1) edge node {} (y1);
                
            \path[->] (l1) edge node {} (a1);
            \path[->] (l1) edge [out=20, in=160, looseness=1]  node {} (y1);
            
            \path[->] (a1) edge node {} (y1);
        
            \path[->] (u0) edge node {} (y1);
            \path[->] (u0) edge node {} (l0);
            \path[->] (u0) edge node {} (l1);
            \end{tikzpicture}
            \caption{Example two time point directed acyclic graph for prediction.}
    \end{subfigure}
    \vspace{2em}\\
    \begin{subfigure}{\linewidth}
        \centering
        \begin{tikzpicture}[> = stealth, shorten > = 1pt, auto, node distance = 2.5cm, inner sep = 0pt,minimum size = 0.5pt, very thick]
            \tikzstyle{every state}=[
              draw = white,
              fill = white
            ]
            \node[state] (l0) {$L_{0}$};
            \node[state] (a0) [right of=l0] {$A_{0} \mid a_0$};
            \node[state] (l1) [right of=a0] {$L^{a_0}_1$};
            \node[state] (a1) [right of=l1] {$A^{a_0}_1 \mid a_1$};
            \node[state] (y1) [right of=a1] {$Y^{a_0, a_1}_2$};
            \node[state] (u0) [below of=l0] {$U$};
            \node[state] (p0) [below of=u0] {$P_{0}$};
            \node[state] (b0) [right of=p0] {};
            \node[state] (b1) [right of=b0] {};
            \node[state] (p1) [right of=b1] {$P^{a_0}_{1}$};
            
            \path[->] (l0) edge node {} (a0);
            \path[->] (l0) edge node {} (p1);
            \path[->] (l0) edge [out=20, in=160, looseness=1] node {} (l1);
            \path[->] (l0) edge [out=20, in=160, looseness=1] node {} (a1);
            \path[->] (l0) edge [out=20, in=160, looseness=1] node {} (y1);
        
            \path[->] (a0) edge node {} (p1);
            \path[->] (a0) edge node {} (l1);
            \path[->] (a0) edge [out=20, in=160, looseness=1]  node {} (a1);
            \path[->] (a0) edge [out=20, in=160, looseness=1]  node {} (y1);
            
            \path[->] (p0) edge node {} (p1);
            \path[->] (p0) edge node {} (y1);
            
            \path[->] (p1) edge node {} (y1);
                
            \path[->] (l1) edge node {} (a1);
            \path[->] (l1) edge [out=20, in=160, looseness=1]  node {} (y1);
            
            \path[->] (a1) edge node {} (y1);
        
            \path[->] (u0) edge node {} (y1);
            \path[->] (u0) edge node {} (l0);
            \path[->] (u0) edge node {} (l1);
            \end{tikzpicture}
            \caption{Single world intervention graph of intervention on $A_0$ and $A_1$.}
    \end{subfigure}
    \caption{Example directed acyclic graph (DAG) and single world intervention graph (SWIG) for a two time point process.}
    \label{fig:dag2}
\end{figure}

\subsection{Identifiability conditions}
We now consider modified identifiability conditions under time-varying treatment initiation. For all $k$ from 0 to $K$, we require
\begin{enumerate}
    \item[B1.] \textit{Exchangeability:} $Y^g_{K+1} \perp \!\!\! \perp A_k \mid \overline{X}_k, \overline{A}_{k-1}$
    \item[B2.] \textit{Consistency:} $Y_{K+1} = Y^g_{K+1}$\text{ and } $\overline{X}_{k} = \overline{X}^g_{k}$ \text{ if } $\overline{A}_k = \overline{a}_k^g$
    \item[B3.] \textit{Positivity:} $1 > \Pr[A_k = a_k \mid \overline{X}_k = \overline{x}_k, \overline{A}_{k -1} = \overline{a}_{k-1}] > 0$
\end{enumerate}

\subsection{Identification of general loss functions}
The following theorem extends Theorem 2 to the case of a time-varying treatment. 
\begin{theorem}
    Under conditions B1-B3 above, the expected counterfactual loss under time-varying regime $g$ is identified by the functionals
    \begin{align}\label{eqn:cl_tv_estimand}
    \begin{split}
        \psi(g) &= \E_{X_0}\bigg[\E_{X_1}\bigg\{\ldots \E_{X_{K-1}}\bigg(\E_{X_{K}}[L\{Y, \widehat{\mu}(X^*)\} \mid \overline{X}_K, \overline{A}_K=\overline{a}^g_K, D_{test} = 1] \\
        & \qquad \big\vert\; \overline{X}_{K-1}, \overline{A}_{K-1}=\overline{a}^g_{K-1}, D_{test} = 1\bigg) \ldots \;\big\vert\; X_{0}, A_{0}=a^g_{0}, D_{test} = 1\bigg\}\;\big\vert\; D_{test} = 1\bigg]
    \end{split}
    \end{align}
and 
    \begin{equation}\label{eqn:ipw_tv_estimand}
        \psi(g) = \E\left[\frac{I(\overline{A}_K = \overline{a}^g_K, D_{test} = 1)}{\prod_{k=0}^K\Pr[A_k = a^g_k \mid \overline{X}_k, \overline{A}_{k-1} = \overline{a}^g_{k-1}, D_{test} = 1]}L\{Y, \widehat{\mu}(X^*)\} \mid D_{test} = 1\right]
    \end{equation}
in the test set for general loss function $L\{Y^{g}, \widehat{\mu}\}$, where the first expression is a sequence of iterated expectations and the second is an inverse-probability weighted expectation.
\end{theorem}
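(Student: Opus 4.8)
The plan is to establish the two representations in sequence: I would obtain the iterated-expectation (g-formula) representation (\ref{eqn:cl_tv_estimand}) first, and then derive the inverse-probability-weighted representation (\ref{eqn:ipw_tv_estimand}) as an algebraic consequence of it. This mirrors the two-step structure of the proof of Theorem 2 but now iterated across the $K+1$ treatment times.

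For the first representation, I would argue by peeling off one time point at a time, starting from $\psi(g) = \E[L\{Y^g, \widehat{\mu}(X^*)\}]$ and restricting to the test set by random sampling. Conditioning on $X_0$ and invoking exchangeability B1 at $k=0$, namely $Y^g_{K+1} \perp \!\!\! \perp A_0 \mid X_0$, lets me insert the event $A_0 = a^g_0$ into the conditioning set without changing the conditional expectation. Consistency B2 then guarantees that, along the regime-consistent treatment path, the observed covariate $X_1$ coincides with its potential version, so I may take an additional inner expectation over $X_1$ given $(X_0, A_0 = a^g_0)$. Repeating the pattern, namely exchangeability at time $k$ to insert $A_k = a^g_k$ followed by an inner expectation over $X_{k+1}$, descends to time $K$, where a final application of consistency B2 replaces $Y^g_{K+1}$ by the observed $Y$ inside the innermost conditional expectation. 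Positivity B3 is used throughout to ensure each conditional expectation given the regime-consistent history is well defined. Collecting the nested expectations yields the sequential g-formula in (\ref{eqn:cl_tv_estimand}).

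For the second representation, I would show it equals the first. The key identity is the single-time-point reweighting
\[
\E[\,Z \mid \overline{X}_k, \overline{A}_k = \overline{a}^g_k, D_{test}=1] = \E\!\left[\frac{I(A_k = a^g_k)}{\Pr[A_k = a^g_k \mid \overline{X}_k, \overline{A}_{k-1}, D_{test}=1]}\, Z \;\Big|\; \overline{X}_k, \overline{A}_{k-1} = \overline{a}^g_{k-1}, D_{test}=1\right],
\]
which follows from the definition of conditional expectation exactly as in the time-fixed case, since the indicator selects $A_k = a^g_k$ and the denominator is constant given the conditioning history. Substituting this identity at each level of the nested g-formula turns the indicator-and-weight factors into a telescoping product, and repeatedly reversing the law of iterated expectations collapses the nested conditional expectations into the single weighted expectation in (\ref{eqn:ipw_tv_estimand}), with the product of one-step inverse probabilities forming the cumulative treatment weight $\prod_{k=0}^K \Pr[A_k = a^g_k \mid \overline{X}_k, \overline{A}_{k-1} = \overline{a}^g_{k-1}, D_{test}=1]^{-1}$.

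The main obstacle I expect is the inductive bookkeeping rather than any single deep step: one must verify carefully that at each stage exchangeability B1 is invoked conditional on precisely the history $(\overline{X}_k, \overline{A}_{k-1} = \overline{a}^g_{k-1})$ already fixed by the regime, and that consistency B2 is applied to the correct regime-consistent portion of the covariate path so that potential and observed histories may be interchanged. Stating and proving a clean inductive invariant, namely that the $k$-th nested conditional expectation equals $\E[L\{Y^g, \widehat{\mu}(X^*)\} \mid \overline{X}_k, \overline{A}_{k-1} = \overline{a}^g_{k-1}, D_{test}=1]$, is the cleanest way to discharge this bookkeeping and to make the telescoping in the weighted representation transparent.
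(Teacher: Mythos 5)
Your proposal is correct and follows essentially the same route as the paper's proof: peeling off time points via iterated expectations and exchangeability B1 to insert $A_k = a^g_k$, invoking consistency B2 to replace $Y^g_{K+1}$ by $Y$ inside the innermost conditional expectation, and then deriving the weighted form from the one-step reweighting identity (the paper's $W_K$ step) with recursive telescoping of the inverse probabilities. Your explicit inductive invariant, and your care in applying the covariate part of B2 along the regime-consistent path (which the paper leaves implicit in its ``arguing recursively'' step), are just a cleaner write-up of the same argument.
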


\begin{proof}
    For the first representation we have 
    \begin{align*}
        \psi(g) &= \E[L\{Y^{g}, \widehat{\mu}(X^*)\}] \\
        & = \E[L\{Y^{g}, \widehat{\mu}(X^*)\}\mid D_{test} = 1] \\
        & = \E(\E[L\{Y^{g}, \widehat{\mu}(X^*)\}\mid X_0, D_{test} = 1] \mid D_{test} = 1) \\
        & = \E(\E[L\{Y^{g}, \widehat{\mu}(X^*)\}\mid X_0, A_0 = a^g_0, D_{test} = 1] \mid D_{test} = 1) 
    \end{align*}
    where the first line follows from the definition of $\psi(g)$, the second from random sampling of the test set, the third from the law of iterated expectations, and the fourth from the exchangeability condition B1. Arguing recursively from $k = 0$ to $K$, we can repeatedly invoke iterated expectations and exchanageability to insert $\overline{X}_k$ and $\overline{A}_k = \overline{a}^g_k$, such that
    \begin{align*}
        \psi(g) &= \E_{X_0}\bigg[\E_{X_1}\bigg\{\ldots \E_{X_{K}}\bigg(E[L\{Y^g, \widehat{\mu}(X^*)\} \mid \overline{X}_K, \overline{A}_K=\overline{a}^g_K, D_{test} = 1] \\
        & \qquad \big\vert\; \overline{X}_{K}, \overline{A}_{K}=\overline{a}^g_{K}, D_{test} = 1\bigg) \ldots \;\big\vert\; X_{0}, A_{0}=a^g_{0}, D_{test} = 1\bigg\}\;\big\vert\; D_{test} = 1\bigg]\\
        &= \E_{X_0}\bigg[\E_{X_1}\bigg\{\ldots \E_{X_{K}}\bigg(\E[L\{Y, \widehat{\mu}(X^*)\} \mid \overline{X}_K, \overline{A}_K=\overline{a}^g_K, D_{test} = 1] \\
        & \qquad \big\vert\; \overline{X}_{K}, \overline{A}_{K}=\overline{a}^g_{K}, D_{test} = 1\bigg) \ldots \;\big\vert\; X_{0}, A_{0}=a^g_{0}, D_{test} = 1\bigg\}\;\big\vert\; D_{test} = 1\bigg]
    \end{align*}
    where the last line follows by consistency condition B2. For the second representation, note that for the inner most expectations we can proceed as previously
    \begin{align*}
        & \E(\E[L\{Y, \widehat{\mu}(X^*)\} \mid \overline{X}_K, \overline{A}_k=\overline{a}^g_K, D_{test} = 1] \mid \overline{X}_{K-1}, \overline{A}_{k-1}=\overline{a}^g_{K-1}, D_{test} = 1) \\
        &= \E\left(\E\left[W_K L\{Y, \widehat{\mu}(X^*)\} \mid \overline{X}_K, \overline{A}_{K-1}, D_{test} = 1\right] \mid \overline{X}_{K-1}, \overline{A}_{K-1}, D_{test} = 1\right)\\
        &= \E\left(W_K \E\left[L\{Y, \widehat{\mu}(X^*)\} \mid \overline{X}_K, \overline{A}_{K-1}, D_{test} = 1\right] \mid \overline{X}_{K-1}, \overline{A}_{K-1}, D_{test} = 1\right)\\
        &= \E\left[W_K L\{Y, \widehat{\mu}(X^*)\} \mid \overline{X}_{K-1}, \overline{A}_{K-1}, D_{test} = 1\right]
    \end{align*}
    where the second line follows from the definition of conditional expectation, the third removes the constant fraction outside expectation, and the last reverses the law of iterated expectations and where 
    $$W_K = \frac{I(A_K = a_K^g, D_{test} = 1)}{\Pr[A_K = a_K^g \mid \overline{X}_K, \overline{A}_{K-1}, D_{test} = 1]}$$
    Arguing recursively from $k = 0$ to $K$, we get
    \begin{align*}
        \psi(g) &= \E\left[\frac{I(\overline{A}_K = \overline{a}^g_K, D_{test} = 1)}{\prod_{k=0}^K\Pr[A_k = a^g_k \mid \overline{X}_k, \overline{A}_{k-1} = \overline{a}^g_{k-1}, D_{test} = 1]}L\{Y, \widehat{\mu}(X^*)\} \mid D_{test} = 1\right] 
    \end{align*}
    which is the inverse-probability weighted representation with weights equal to 
    $$W_k = \frac{I(\overline{A}_K = \overline{a}^g_K, D_{test} = 1)}{\prod_{k=0}^K\Pr[A_k = a^g_k \mid \overline{X}_k, \overline{A}_{k-1} = \overline{a}^g_{k-1}, D_{test} = 1]}.$$
\end{proof}

\subsection{Plug-in estimation}

Using sample analogs for the identified expressions \ref{eqn:cl_tv_estimand} and \ref{eqn:ipw_tv_estimand}, we obtain two plug-in estimators for the expected counterfactual loss under a generalized loss function
\begin{equation*}
    \widehat{\psi}_{CL}(g) = \frac{1}{n_{test}}\sum_{i=1}^nI(D_{test, i} = 1)\widehat{h}_{a_0}(X_i)
\end{equation*}
and 
\begin{equation*}
    \widehat{\psi}_{IPW}(g) = \frac{1}{n_{test}}\sum_{i=1}^n \frac{I(\overline{A}_{K,i} = \overline{a}^g_{K,i}, D_{test,i} = 1)}{\prod_{k=0}^K \widehat{e}_{a_k}(X_i)} L\{Y_i, \widehat{\mu}(X^*_i)\}
\end{equation*}
where $h_{a_{K+1}}(X) = L\{Y, \widehat{\mu}(X^*)\}$ and $h_{a_{t+1}}(X)$ is recursively defined for $t$ = $K, \ldots, 0$
\[h_{a_t}(X_t) : \E[ h_{a_{t+1}}(X_{t+1}) \mid \overline{X}_t, \overline{A}_t = \overline{a}^g_t, D_{test} = 1]\]
Similarly, $\widehat{e}_{a_k}(X)$ is an estimator for $\Pr[A_k = a^g_k \mid \overline{X}_k, \overline{A}_{k-1} = \overline{a}^g_{k-1}, D_{test} = 1]$. Note that as the number of time points (i.e. $K$) increases, the proportion in the test set who actually follow the regime of interest, i.e. those for whom $I(\overline{A}_K = \overline{a}^g_K, D_{test,i} = 1)=1$ may be prohibitively small, in which case plug-in estimation may not be feasible. In this case, additional modeling assumptions will be necessary to borrow information from other regimes.
\newpage

\section{Doubly robust estimators} \label{sec:dr}

\subsection{Efficient influence function}
We have shown previously that, under the identifiability conditions of section \ref{sec:identifiability}, the expected counterfactual loss of a generalized loss function $L\{Y^a, \mu(X^*)\}$ is identified by the observed data functional
\begin{equation*}\label{eqn:cl_estimand}
    \psi(a) = \E\left(\E[L\{Y, \mu(X^*)\} \mid X, A=a] \right).
\end{equation*}
In the following theorem, we identify the efficient influence function for $\psi(a)$ under a nonparametric model for the observed data in the test set. 

\begin{theorem}
    The influence function for $\psi(a)$ under a nonparametric model for the observable data $O = (X, A, Y)$ is 
\begin{align*}
    \chi_{P_0}^1 &= \frac{I(A = a)}{\Pr[A = a \mid X]}(L\{Y, \mu(X^*)\} - \E[L\{Y, \mu(X^*)\} \mid X, A=a])  \; + \\
    & \qquad (\E[L\{Y, \mu(X^*)\} \mid X, A=a] - \psi(a)).
\end{align*}
As the influence function under a nonparametric model is always unique, it is also the efficient influence function. 
\end{theorem}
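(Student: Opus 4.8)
The plan is to obtain $\chi_{P_0}^1$ as the pathwise (Gateaux) derivative of the functional $\psi(P) = \E_X\big[\E[L\{Y,\mu(X^*)\}\mid X, A=a]\big]$ along regular one-dimensional submodels, and then to appeal to uniqueness of the influence function in a nonparametric model. First I would treat $\mu(X^*)$ as a fixed function, since it is determined in the training set, so that $L\{Y,\mu(X^*)\}$ is merely a transformed outcome and $\psi(a)$ reduces to the familiar mean of a counterfactual quantity under $A=a$. I would factorize the test-set density as $p(o) = p(y\mid x,a')\,p(a'\mid x)\,p(x)$ and introduce a submodel $\{P_t\}$ through $P_0$ at $t=0$ with score $s(O) = s(Y\mid X,A) + s(A\mid X) + s(X)$, the standard orthogonal decomposition of the nonparametric tangent space.

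Next I would compute $\dot\psi \equiv \frac{d}{dt}\psi(P_t)\big|_{t=0}$. Writing $g(X) = \E[L\{Y,\mu(X^*)\}\mid X, A=a]$, differentiation under the integral yields two contributions: one from perturbing the conditional law of $Y$ given $(X,A=a)$ and one from perturbing the marginal law of $X$. Crucially, the treatment-mechanism score $s(A\mid X)$ does not appear, because $\psi$ conditions on the fixed value $A=a$ and so does not depend on $p(a\mid x)$. The first contribution arises as an integral of $L\{Y,\mu(X^*)\}\, s(y\mid x,a)$ against $p(y\mid x,a)\,p(x)$; I would rewrite it as a full-data expectation by multiplying and dividing by $p(a\mid x)$, which produces the inverse-probability weight $I(A=a)/\Pr[A=a\mid X]$. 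The second contribution is simply $\E[g(X)\,s(X)]$.

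I would then verify the stated influence function by a guess-and-check argument. Setting
\[
\chi_{P_0}^1 = \frac{I(A=a)}{\Pr[A=a\mid X]}\big(L\{Y,\mu(X^*)\} - g(X)\big) + \big(g(X) - \psi(a)\big),
\]
I would show $\E_{P_0}[\chi_{P_0}^1\, s(O)] = \dot\psi$ for every submodel score. The key observations are that the first summand has conditional mean zero given $(X,A)$ and is therefore orthogonal to both $s(A\mid X)$ and $s(X)$, so it picks up only the $s(Y\mid X,A)$ component and reproduces the first contribution; while the second summand is a mean-zero function of $X$ alone, orthogonal to $s(Y\mid X,A)$ and $s(A\mid X)$, and reproduces the second contribution. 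I would also confirm $\E_{P_0}[\chi_{P_0}^1]=0$. Finally, since the observed-data model is nonparametric, the tangent space is the entire Hilbert space of mean-zero square-integrable scores, so the influence function is unique; the verified $\chi_{P_0}^1$ is therefore simultaneously the influence function and the efficient influence function, completing the argument.

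The main obstacle I anticipate is the bookkeeping in the pathwise-derivative step: converting the ``conditional on $A=a$'' score $s(Y\mid X,A)$ into an expectation over the full observed-data distribution via the inverse-probability weight, and confirming that the cross terms vanish so that $\chi_{P_0}^1$ reproduces $\dot\psi$ exactly rather than up to a stray nuisance-score contribution. Once the orthogonality of the three tangent-space components is set up cleanly, the remaining verification is routine.
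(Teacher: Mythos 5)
Your proposal is correct and takes essentially the same route as the paper: both establish the result by computing the pathwise derivative $\frac{d}{dt}\psi_{P_t}(a)\big\vert_{t=0}$ along parametric submodels, matching it to $\E_{P_0}[\chi_{P_0}^1\, s(O)]$ for all scores, and then invoking uniqueness of the influence function under the nonparametric model. Your version merely makes explicit the orthogonal decomposition $s(O)=s(Y\mid X,A)+s(A\mid X)+s(X)$ and the resulting term-by-term orthogonality check (including the vanishing of the treatment-mechanism contribution), details the paper's chain-rule calculation leaves implicit.
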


\begin{proof}
To show that $\chi_{P_0}^1$ is the efficient influence function, we will use the well-known fact that the influence function is a solution to 
\begin{equation*}
    \frac{d}{dt} \psi_{P_t}(a)\bigg\vert_{t=0} = \E_{P_0}(\chi_{P_0}^1g_{P_0})
\end{equation*}
where $g_{P_0}$ is the score of the obeservable data under the true law $P_0$ and $P_t$ is a parametric submodel indexed by $t \in [0,1]$ and the pathwise derivative of the submodel is evaluated at $t = 0$ corresponding to the true law $P_0$. Let $h_a(X) = E_{P_0}[L\{Y, \mu(X^*)\} \mid X, A=a]$. Beginning with the left hand side
\begin{align*}
    \frac{d}{dt} \psi_{P_t}(a)\bigg\vert_{t=0} &=\frac{d}{dt} \E_{P_t}\left(\E_{P_t}[L\{Y, \mu(X^*)\} \mid X, A=a] \right)\bigg\vert_{t=0} \\
    &=\frac{\partial}{\partial t} \E_{P_t}\left(\E_{P_0}[L\{Y, \mu(X^*)\} \mid X, A=a] \right)\bigg\vert_{t=0} \;+ \\
    &\qquad  \E_{P_0}\left(\frac{\partial}{\partial t} \E_{P_t}[L\{Y, \mu(X^*)\} \mid X, A=a] \bigg\vert_{t=0} \right) \\
    &=\E_{P_0}\left[\left\{h_a(X) - \psi(a) \right\}g_{X, A, Y}(O)\right] \;+ \\
    &\qquad  \E_{P_0}\left\{\left( \frac{I(A = a)}{\Pr[A = a \mid X]} \bigg[L\{Y, \mu(X^*)\} - h_a(X) \bigg]\right)g_{X, A, Y}(O)\right\} \\
    &= \E_{P_0}\left\{\left(h_a(X) - \psi(a) + \frac{I(A = a)}{\Pr[A = a \mid X]} \bigg[L\{Y, \mu(X^*)\} - h_a(X) \bigg]\right)g_{X, A, Y}(O)\right\} 
\end{align*}
where the first line is the definition, the second line applies the chain rule, the third applies definition of the score, and the last uses linearity of expectations. Returning to original supposition, it follows that the influence function is 
\begin{align*}
    \chi_{P_0}^1 &= \frac{I(A = a)}{\Pr[A = a \mid X]}(L\{Y, \mu(X^*)\} - \E[L\{Y, \mu(X^*)\} \mid X, A=a])  \; + \\
    & \qquad (\E[L\{Y, \mu(X^*)\} \mid X, A=a] - \psi(a)).
\end{align*}
\end{proof}

\subsection{One-step estimator}
Given the efficient influence function above and  random sampling in the test set, the one-step estimator for $\psi(a)$ is given by
\begin{equation*}
    \widehat{\psi}_{DR}(a) = \frac{1}{n_{test}}\sum_{i=1}^n I(D_{test, i} = 1)\widehat{h}_a(X_i) + \frac{I(A_i = a, D_{test, i} = 1)}{\widehat{e}_a(X_i)} \left[ L\{Y_i, \mu(X^*_i)\} - \widehat{h}_a(X_i)\right]
\end{equation*}

\subsection{Asymptotic properties}
In previous sections, the asymptotic properties of $\widehat{\psi}_{CL}(a)$ and $\widehat{\psi}_{IPW}(a)$ follow from standard parametric theory\footnote{after separating estimation of $\mu(X^*)$ from the evaluation of performance by random partition of test set.}. However, the asymptotic properties of $\widehat{\psi}_{DR}(a)$ are complicated by the estimation of two nuisance functions, $\widehat{h}_a(X)$ and $\widehat{e}_a(X)$, and the fact that, we do not immediately assume a parametric model for either. To simplify the derivation of the large sample properties of $\widehat{\psi}_{DR}(a)$ we begin by defining
$$
H\left(e_a^{\prime}(X), h_a^{\prime}(X)\right)=h_a^{\prime}(X)+\frac{I(A = a)}{e_a^{\prime}(X)}\left[L\left(Y, \mu\left(X^*\right)\right)-h_a^{\prime}(X)\right]
$$
for arbitrary functions $e_a^{\prime}(X)$, and $h_a^{\prime}(X)$. Here we suppress the dependence on being in the test set for ease of exposition, but note that the rest procedes the same if we were to limit our focus to the test set. Note, the doubly robust estimator can be written as $\widehat{\psi}_{DR}(a)=\frac{1}{n} \sum_{i=1}^n H\left(\widehat{e}_a\left(X_i\right), \widehat{h}_a\left(X_i\right)\right)$. We define the probability limits of $\widehat{e}_a(X)$ and $\widehat{h}_a(X)$ as $e_a^*(X)$ and $h_a^*(X)$, respectively. By definition, when $\widehat{e}_a(X)$ and $\widehat{h}_a(X)$ are correctly specified, the limits are  $e_a^*(X)=$ $\operatorname{Pr}[A=a \mid X]$ and $h^*_a(X)=\mathrm{E}\left[L\left(Y, \mu\left(X^*\right)\right) \mid X, A=a\right]$.

To derive the asymptotic properties of $\widehat{\psi}_{DR}(a)$, we make the following assumptions:

\begin{enumerate}
    \item[D1.] $H(\widehat{e}_a(X), \widehat{h}_a(X))$ and its limit $H\left(e^*_a(X), h^*_a(X)\right)$ fall in a Donsker class.
    \item[D2.]  $\left\|H(\widehat{e}_a(X), \widehat{h}_a(X))-H\left(e^*_a(X), h^*_a(X)\right)\right\| \stackrel{P}{\longrightarrow} 0$.
    \item[D3.] (Finite second moment). $\mathrm{E}\left[H\left(e^*_a(X), h^*_a(X)\right)^2\right]<\infty$.
    \item[D4.] (Model double robustness). At least one of the models $\widehat{e}_a(X)$ or $\widehat{h}_a(X)$ is correctly specified. That is, at least one of $e^*_a(X)=\operatorname{Pr}[A=a \mid X]$ or $h^*_a(X)=\mathrm{E}\left[L\left(Y, \mu\left(X^*\right)\right) \mid X, A=a\right]$ holds, but not necessarily both.
\end{enumerate}

Assumption D1 is a well-known restriction on the complexity of the functionals $\widehat{e}_a(X)$ and $\widehat{h}_a(X)$. As long as $\widehat{e}_a(X), \widehat{h}_a(X), e^*_a(X)$, and $h^*_a(X)$ are Donsker and all are uniformly bounded then Assumption D1 holds by the Donsker preservation theorem. Many commonly used models such as generalized linear models fall within the Donsker class. This requirement can be further relaxed through sample-splitting, in which case more flexible machine learning algorithms such as random forests, gradient boosting, or neural networks may be used to estimate $\widehat{e}_a(X)$ and $\widehat{h}_a(X)$. 

Using Assumptions D1 through D4, below we prove:
\begin{enumerate}
    \item (Consistency) $\widehat{\psi}_{DR}(a) \stackrel{P}{\longrightarrow} \psi(a)$.
    \item (Asymptotic distribution) $\widehat{\psi}_{DR}(a)$ has the asymptotic representation
    $$
    \sqrt{n}\left(\widehat{\psi}_{DR}(a)-\psi(a)\right)=\sqrt{n}\left(\frac{1}{n} \sum_{i=1}^n H\left(e^*_a(X_i), h^*_a(X_i)\right)-\mathrm{E}\left[H\left(e^*_a(X), h^*_a(X)\right)\right]\right)+R e+o_P(1),
    $$
    where
    $$
    R e \leq \sqrt{n} O_P\left(\left\|\widehat{h}_a(X)-\mathrm{E}\left[L\left(Y, \mu(X^*)\right) \mid X, A=a\right]\right\|_2^2 \times\Big\|\widehat{e}_a(X)-\operatorname{Pr}[A=a \mid X]\Big\|_2^2\right) 
    $$
    and thus if $\widehat{h}_a(X)$ and $\widehat{e}_a(X)$ converge at combined rate of at least $\sqrt{n}$ then
    $$
    \sqrt{n}\left(\widehat{\psi}_{DR}(a)-\psi(a)\right) \stackrel{d}{\longrightarrow} N\left(0, \operatorname{Var}\left[H(e^*_a(X), h^*_a(X))\right]\right)
    $$
\end{enumerate}

\subsubsection{Consistency}
Using the probability limits $e_a^*(X)$ and $h_a^*(X)$ defined previously, the double robust estimator $\widehat{\psi}_{DR}(a)$ converges in probability to 
$$\widehat{\psi}_{DR}(a) \stackrel{P}{\longrightarrow} \E\left[h^*_a(X)+\frac{I(A = a)}{e^*_a(X)}\left(L\left(Y, \mu\left(X^*\right)\right)-h^*_a(X)\right)\right]$$
Here we show that the right-hand side is equal to $\psi$ under assumptions D1-
D4 when either:
\begin{enumerate}
    \item $\widehat{e}_a(X)$ is correctly specified
    \item $\widehat{h}_a(X)$ is correctly specified
\end{enumerate}
First consider the case where $\widehat{e}_a(X)$ is correctly specified, that is $e^*_a(X)=\operatorname{Pr}[A=a \mid X]$, but we do not assume that the limit $h^*_a(X)$ is equal to $\left.\mathrm{E}\left[L\left(Y, g\left(X^*\right)\right) \mid X, A=a\right]\right)$. Recall, as shown previously $\psi = \E\left[\frac{I(A = a)}{\Pr[A = a \mid X]}L(Y, \widehat{\mu}(X^*))\right] $
$$
\begin{aligned}
\widehat{\psi}_{DR}(a) & \stackrel{P}{\rightarrow}  \E\left[h^*_a(X)+\frac{I(A = a)}{e^*_a(X)}\left(L\left(Y, \mu\left(X^*\right)\right)-h^*_a(X)\right)\right] \\
& =\E\left[h^*_a(X)-\frac{I(A = a)}{e^*_a(X)}h^*_a(X)\right]+\psi(a) \\
& =\E\left[\E\left[h^*_a(X)-\frac{I(A = a)}{e^*_a(X)}h^*_a(X) \mid X \right]\right]+\psi(a) \\
& =\E\left[h^*_a(X)-\frac{1}{e^*_a(X)}h^*_a(X) \E\left[I(A = a) \mid X \right]\right]+\psi(a) \\
& =\E\left[h^*_a(X)-\frac{1}{e^*_a(X)}h^*_a(X) \Pr\left[A = a \mid X \right]\right]+\psi(a) \\
& =\E\left[h^*_a(X)-h^*_a(X)\right]+\psi(a) \\
& =\psi(a) .
\end{aligned}
$$
Next consider the case when $\widehat{h}_a(X)$ is correctly specified, that is
$$
h^*_a(X)=\mathrm{E}\left[L\left(Y, g\left(X^*\right)\right) \mid X, A=a\right]
$$
and this time we do not make the assumptions that the limit $e^*_a(X)$ is equal to $\operatorname{Pr}[A=a \mid X]$. Recall, as shown previously $\psi(a) = \E\left[\E\left[L(Y, \widehat{\mu}(X^*))\mid X, A = a\right]\right] $. 

$$
\begin{aligned}
\widehat{\psi}_{DR}(a) & \stackrel{P}{\rightarrow}  \E\left[h^*_a(X)+\frac{I(A = a)}{e^*_a(X)}\left(L\left(Y, \mu\left(X^*\right)\right)-h^*_a(X)\right)\right] \\
&= \E\left[h^*_a(X)\right]+\E\left[\frac{I(A = a)}{e^*_a(X)}\left(L\left(Y, \mu\left(X^*\right)\right)-h^*_a(X)\right)\right] \\
& =\psi(a)+\E\left[\frac{I(A = a)}{e^*_a(X)}\left(L\left(Y, \mu\left(X^*\right)\right)-h^*_a(X)\right)\right] \\
& =\psi(a)+\E\left[\E\left[\frac{I(A = a)}{e^*_a(X)}\left(L\left(Y, \mu\left(X^*\right)\right)-h^*_a(X)\right) \mid X \right]\right] \\
& =\psi(a)+\E\left[\frac{I(A = a)}{e^*_a(X)} \E\left[\left(L\left(Y, \mu\left(X^*\right)\right)-h^*_a(X)\right) \mid X \right]\right] \\
& =\psi(a)+\E\left[\E\left[\left(L\left(Y, \mu\left(X^*\right)\right)-h^*_a(X)\right) \mid X, A=a \right]\right] \\
& =\psi(a)+\E\left[\E\left[L\left(Y, \mu\left(X^*\right)\right) \mid X, A=a  \right] -h^*_a(X) \right] \\
& =\psi(a)+\E\left[h^*_a(X) -h^*_a(X) \right] \\
& =\psi(a) .
\end{aligned}
$$

\subsubsection{Asymptotic distribution}

For a random variable $W$ we define notation
$$
\mathbb{G}_n(W)=\sqrt{n}\left(\frac{1}{n} \sum_{i=1}^n W_i-\mathrm{E}[W]\right) .
$$
and thus the asymptotic representation of $\widehat{\psi}_{DR}(a)$ can be written
$$
\begin{aligned}
\sqrt{n}\left(\widehat{\psi}_{DR}(a)-\psi(a)\right)=\mathbb{G}_n &(H(\widehat{e}_a(X), \widehat{h}_a(X)))-\mathbb{G}_n\left(H\left(e^*_a(X), h^*_a(X)\right)\right) \\
& +\mathbb{G}_n\left(H\left(e^*_a(X), h^*_a(X)\right)\right) \\
& +\sqrt{n}(\mathrm{E}[H(\widehat{e}_a(X), \widehat{h}_a(X))]-\psi(a))
\end{aligned}
$$
where we add and subtract the term $\mathbb{G}_n\left(H\left(e^*_a(X), h^*_a(X)\right)\right)$ and add another zero term in \\$\sqrt{n}(\mathrm{E}[H(\widehat{e}_a(X), \widehat{h}_a(X))]-\psi(a))$. For the first term, Assumption D1 implies
$$
\mathbb{G}_n(H(\widehat{e}_a(X), \widehat{h}_a(X)))-\mathbb{G}_n\left(H\left(e^*_a(X), h^*_a(X)\right)\right)=o_P(1)
$$
Let 
$$
Re=\sqrt{n}(\mathrm{E}[H(\widehat{e}_a(X), \widehat{h}_a(X))]-\psi(a))
$$
now we have
$$
\sqrt{n}\left(\widehat{\psi}_{DR}(a)-\psi(a)\right)=\sqrt{n}\left(\frac{1}{n} \sum_{i=1}^n\left(H\left(e^*_a(X_i), h^*_a(X_i)\right)-\mathrm{E}\left[H\left(e^*_a(X), h^*_a(X)\right)\right]\right)\right)+R e+o_P(1)
$$
Let's try to calculate the upper bound of $Re$. First, note

$$
n^{-1 / 2} Re=
\underbrace{\mathrm{E}\left[\widehat{h}_a(X)\right]}_{R_1}+\underbrace{\mathrm{E}\left[\frac{I(A = a)}{\widehat{e}_a(X)}\left[L\left(Y, \mu\left(X^*\right)\right)-\widehat{h}_a(X)\right]\right]}_{R_2}-\psi(a).
$$
We rewrite term $R_2$ as:
$$
\begin{aligned}
R_2 & =\mathrm{E}\left[\frac{ I(A=a)}{\widehat{e}_a(X)}\left\{L\left(Y, \mu\left(X^*\right)\right)-\widehat{h}_a(X)\right\}\right] \\
& =\mathrm{E}\left[\mathrm{E}\left[\frac{ I(A=a)}{\widehat{e}_a(X)}\left\{L\left(Y, \mu\left(X^*\right)\right)-\widehat{h}_a(X)\right\} \mid X\right]\right] \\
& =\mathrm{E}\left[\frac{1}{\widehat{e}_a(X)} \mathrm{E}\left[\frac{I(A=a)}{\operatorname{Pr}[A=a \mid X]} \operatorname{Pr}[A=a \mid X]\left\{L\left(Y, \mu\left(X^*\right)\right)-\widehat{h}_a(X)\right\} \mid X\right]\right] \\
& =\mathrm{E}\left[\frac{1}{\widehat{e}_a(X)} \mathrm{E}\left[\operatorname{Pr}[A=a \mid X]\left\{L\left(Y, \mu\left(X^*\right)\right)-\widehat{h}_a(X)\right\} \mid X, A=a\right]\right] \\
& =\mathrm{E}\left[\frac{1}{\widehat{e}_a(X)} \operatorname{Pr}[A=a \mid X]\left\{\mathrm{E}\left[L\left(Y, \mu\left(X^*\right)\right) \mid X, A=a\right]-\widehat{h}_a(X)\right\}\right]
\end{aligned}
$$
Combining the above gives
$$
\begin{aligned}
n^{-1 / 2} R e & =\mathrm{E}\left[\widehat{h}_a(X)\right]+\mathrm{E}\left[\frac{I(A = a)}{e_a^{\prime}(X)}\left[L\left(Y, \mu\left(X^*\right)\right)-h_a^{\prime}(X)\right]\right]-\psi(a) \\
& =\mathrm{E}\left[\widehat{h}_a(X)\right]+\mathrm{E}\left[\frac{1}{\widehat{e}_a(X)} \operatorname{Pr}[A=a \mid X]\left\{\mathrm{E}\left[L\left(Y, \mu\left(X^*\right)\right) \mid X, A=a\right]-\widehat{h}_a(X)\right\}\right] \\ 
&- \E\left[\E\left[L(Y, \widehat{\mu}(X^*))\mid X, A = a\right]\right]\\
& =\E\left[\left\{\mathrm{E}\left[L\left(Y, \mu\left(X^*\right)\right) \mid X, A=a\right]-\widehat{h}_a(X)\right\}\times\left\{\frac{1}{\widehat{e}_a(X)} \operatorname{Pr}[A=a \mid X]-1\right\}\right]
\end{aligned}
$$
Using the Cauchy-Schwartz inequality we get.
$$
\begin{aligned}
Re & \leq \sqrt{n}\left(\mathrm{E}\left[\left\{\mathrm{E}\left[L\left(Y, \mu\left(X^*\right)\right) \mid X, A=a\right]-\widehat{h}_a(X)\right\}^2\right]\right)^{1 / 2} \\
& \times\left(\mathrm{E}\left[\left\{\frac{1}{\widehat{e}_a(X)} \operatorname{Pr}[A=a \mid X]-1\right\}^2\right]\right)^{1 / 2} \\
\leq & \sqrt{n} O_P\left(\left\|\mathrm{E}\left[L\left(Y, \mu\left(X^*\right)\right) \mid X, A=a\right]-\widehat{h}_a(X)\right\|_2^2 \times\Big\|\widehat{e}_a(X)-\operatorname{Pr}[A=a \mid X]\Big\|_2^2\right)
\end{aligned}
$$

If both models $\widehat{e}_a(X)$ and $\widehat{h}_a(X)$ are correctly specified and converge at a combined rate faster than $\sqrt{n}$, then $R e=o_P(1)$ and
$$
\begin{aligned}
\sqrt{n}\left(\widehat{\psi}_{DR}(a)-\psi(a)\right) & =\sqrt{n}\left(\frac{1}{n} \sum_{i=1}^n H\left(\operatorname{Pr}\left[A=a \mid X_i\right], \mathrm{E}\left[L\left(Y, g\left(X^*\right)\right) \mid A=a, X_i\right]\right)\right. \\
& \left.-\mathrm{E}\left[H\left(\operatorname{Pr}[A=a \mid X], \mathrm{E}\left[L\left(Y, g\left(X^*\right)\right) \mid A=a, X\right]\right)\right]\right)+o_P(1)
\end{aligned}
$$
By the central limit theorem,
$$
\sqrt{n}\left(\frac{1}{n} \sum_{i=1}^n H\left(e^*_a(X_i), h^*_a(X_i)\right)-\mathrm{E}\left[H\left(e^*_a(X_i), h^*_a(X_i)\right)\right]\right) \stackrel{d}{\longrightarrow} N\left(0, \operatorname{Var}\left[H\left(e^*_a(X), h^*_a(X)\right)\right]\right)
$$
completing the proof.
\newpage 
\section{Risk calibration curve}\label{sec:calib}
When the outcome is binary, another common metric of model performance is risk calibration. Calibration is a measure of the relibability of the risk estimates produced by the fitted model $\widehat{\mu}(X^*)$. For instance, among a sample of patients who receive a risk prediction of 17\% does the outcome really occur for roughly 17\% of them over the follow up period? This can be nonparametrically evalutated across a range of risks by estimating the so-called ``calibration'' curve, i.e. the observed risk as a function of the predicted risk. For counterfactual predictions the relevant calibration curve is the counterfactual risk that would be observed under intervetion $A=a$ as a function of the predicted risk, or
\begin{equation}\label{eqn:calib_estimand}
    \psi(a) = \E[I(Y^a = 1) \mid \widehat{\mu}(x^*)].
\end{equation}

\subsection{Identification}
Here we show that the counterfactual calibration curve $\psi(a) = \E[I(Y^a = 1) \mid \widehat{\mu}(x^*)]$ is identified using the observed data under the assumptions of section \ref{sec:identifiability}.
\begin{theorem}
     Under conditions A1-A3, the risk calibration curve is identified by the observed data functionals
\begin{equation}\label{eqn:cl_calib_estimand}
    \psi(a) = \E[\E\{I(Y = 1) \mid X, A = a, \widehat{\mu}(x^*), D_{test} = 1\}\mid \widehat{\mu}(x^*), D_{test} = 1]
\end{equation}
and 
\begin{equation}\label{eqn:ipw_calib_estimand}
    \psi(a) = \E\left[\frac{I(A = a)}{\Pr[A = a \mid X, \widehat{\mu}(x^*), D_{test} = 1]} I(Y=1) \mid \widehat{\mu}(x^*), D_{test} = 1\right]
\end{equation}
in the test set. 

\end{theorem}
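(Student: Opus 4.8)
The plan is to mirror the two-part argument used for Theorems 1 and 2, adapting it to the conditional (rather than marginal) nature of the calibration functional. The crucial observation is that $\widehat{\mu}(x^*)$ is a deterministic function of $X^*$, which the setup takes to be a subset of $X$; hence $\widehat{\mu}(x^*)$ is itself $\sigma(X)$-measurable, and conditioning on the pair $(X, \widehat{\mu}(x^*))$ is the same as conditioning on $X$ alone. This lets the identifiability conditions A1--A3, which are stated conditional on $X$, be applied without modification even though the outer conditioning event is now $\{\widehat{\mu}(x^*), D_{test} = 1\}$ rather than simply $\{D_{test}=1\}$.

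For the first (conditional-loss) representation, I would start from $\psi(a) = \E[I(Y^a = 1) \mid \widehat{\mu}(x^*)]$ and first restrict to the test set by random sampling to obtain $\E[I(Y^a=1) \mid \widehat{\mu}(x^*), D_{test}=1]$. Next I would apply the law of iterated expectations to condition additionally on $X$, writing $\psi(a) = \E(\E[I(Y^a=1)\mid X, \widehat{\mu}(x^*), D_{test}=1]\mid \widehat{\mu}(x^*), D_{test}=1)$. Because $\widehat{\mu}(x^*)$ is a function of $X$, the inner conditioning set reduces to $X$, so exchangeability (A2) permits inserting $A=a$ into the inner expectation and consistency (A1) then replaces $I(Y^a=1)$ with $I(Y=1)$, yielding expression (\ref{eqn:cl_calib_estimand}).

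For the second (inverse-probability-weighted) representation, I would show it equals the first, exactly as in the earlier proofs. Inside the outer conditioning on $(\widehat{\mu}(x^*), D_{test}=1)$, I would rewrite the inner conditional mean $\E[I(Y=1)\mid X, A=a, \widehat{\mu}(x^*), D_{test}=1]$ using the definition of conditional expectation as the weighted expectation $\E[\frac{I(A=a)}{\Pr[A=a\mid X, \widehat{\mu}(x^*), D_{test}=1]} I(Y=1)\mid X, \widehat{\mu}(x^*), D_{test}=1]$, and then collapse the iterated expectation over $X$ to obtain expression (\ref{eqn:ipw_calib_estimand}). Positivity (A3) guarantees these weights are well defined.

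The main obstacle---really the only nontrivial point---is justifying that the identification assumptions carry over once we condition on $\widehat{\mu}(x^*)$. I expect this to reduce entirely to the measurability argument above: since $X^* \subseteq X$ makes $\widehat{\mu}(x^*)$ a function of $X$, every statement conditional on $(X, \widehat{\mu}(x^*))$ collapses to the corresponding statement conditional on $X$, so A1--A3 apply verbatim. Once that is noted, the remaining steps are the same routine applications of iterated expectations and the definition of conditional expectation that appear in Theorems 1--3, with $I(Y^a=1)$ playing the role of the generic loss.
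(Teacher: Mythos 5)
Your proposal is correct and follows essentially the same route as the paper's proof: restrict to the test set by random sampling, insert $X$ via iterated expectations, apply exchangeability and then consistency for the first representation, and derive the weighted form from the first via the definition of conditional expectation plus reversing iterated expectations. Your explicit measurability argument---that $\widehat{\mu}(x^*)$ is a function of $X^* \subseteq X$, so conditioning on $(X, \widehat{\mu}(x^*))$ collapses to conditioning on $X$---is exactly the point the paper compresses into its remark ``Recall that $X^*$ is a subset of $X$,'' and spelling it out is a welcome clarification rather than a deviation.
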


\begin{proof}
    For the first representation we have 
    \begin{align*}
        \psi(a) &= \E[I(Y^a = 1) \mid \widehat{\mu}(x^*)] \\
        & = \E[I(Y^a = 1) \mid \widehat{\mu}(x^*), D_{test} = 1] \\
        & = \E[\E\{I(Y^a = 1) \mid X, \widehat{\mu}(x^*), D_{test} = 1\}\mid \widehat{\mu}(x^*), D_{test} = 1] \\
        & = \E[\E\{I(Y^a = 1) \mid X, A = a, \widehat{\mu}(x^*), D_{test} = 1\}\mid \widehat{\mu}(x^*), D_{test} = 1] \\
        & = \E[\E\{I(Y = 1) \mid X, A = a, \widehat{\mu}(x^*), D_{test} = 1\}\mid \widehat{\mu}(x^*), D_{test} = 1]
    \end{align*}
    where the first line follows from the definition of $\psi(a)$, the second from random sampling of the test set, the third from the law of iterated expectations, the fourth from the exchangeability condition A1, and the fifth from the consistency condition A2. Recall that $X^*$ is a subset of $X$. For the second representation, we show that it is equivalent to the first 
    \begin{align*}
        \psi(a) &= \E[\E\{I(Y = 1) \mid X, A = a, \widehat{\mu}(x^*), D_{test} = 1\}\mid \widehat{\mu}(x^*), D_{test} = 1] \\
        &= \E\left[\E\left\{\frac{I(A = a)}{\Pr[A = a \mid X, \widehat{\mu}(x^*), D_{test} = 1]} I(Y=1) \mid X, \widehat{\mu}(x^*), D_{test} = 1\right\}\mid \widehat{\mu}(x^*), D_{test} = 1\right] \\
        &= \E\left[\frac{I(A = a)}{\Pr[A = a \mid X, \widehat{\mu}(x^*), D_{test} = 1]} I(Y=1) \mid \widehat{\mu}(x^*), D_{test} = 1\right]
    \end{align*}
    where the second line follows from the definition of conditional expectation, and the last reverses the law of iterated expectations.
\end{proof}

\subsection{Estimation}
Unlike previous sections, estimation of the full risk calibration curve using sample analogs of the identified expressions \ref{eqn:cl_calib_estimand} and \ref{eqn:ipw_calib_estimand} is generally infeasible because they are conditional on a continuous risk score. Instead analysts typically perform either kernel or binned estimation of the calibration curve functional. In the case of the counterfactual risk calibration curve under a hypothetical intervention, the expression above suggests modifying these approaches either through the use of inverse probability weights or an outcome model.

\section{Area under ROC curve}\label{sec:auc}
Another common metric for the performance of a risk prediction model $\mu(X^*)$ is the area under the receiver operating characteristic (ROC) curve, often referred to as simply the area under the curve (AUC). The AUC can be interpreted as the probability that a randomly sampled observation with the outcome has a higher predicted value than a randomly sampled observation without the outcome. In that sense, it is a measure of the discriminative ability of the model, i.e. the ability to distinguish between cases and noncases. For counterfactual predictions the relevant AUC is the counterfactual AUC that would be observed under intervetion $A=a$, or
\begin{equation}\label{eqn:auc_estimand}
    \psi(a) = \E[I\left(\widehat{\mu}(X^*_i) > \widehat{\mu}(X^*_j)\right) \mid Y_i^a = 1, Y_j^a = 0].
\end{equation}

\subsection{Identification}
Here, we show that the counterfactual AUC $\psi(a)$ is identified by the observed data under a modified set of identification conditions, namely:
\begin{enumerate}
    \item[E1.] \textit{Exchangeability.} $Y^a \perp\!\!\!\perp A \mid X$ 
    \item[E2.] \textit{Consistency.} $Y^a = Y$ if $A = a$
    \item[E3.] \textit{Positivity.} (i) $\Pr(A = a | X = x) > 0$ for all $x$ that have positive density in $f(X, A = a)$, (ii) $\mathrm{E}\left[\Pr[Y = 1 | X_i, A = a]\Pr[Y = 0 | X_j, A = a]\right] > 0 $, where $i$ is a random observation that has the outcome and $j$ is random observation without the outcome.
\end{enumerate}

\begin{theorem}
    Under conditions E1-E3, the counterfactual AUC is identified by the observed data functionals in the test set
\begin{equation}\label{eqn:cl_auc_estimand}
    \psi(a) = \frac{\mathrm{E}\left[I\left(\widehat{\mu}(X_i^*)>\widehat{\mu}(X_j^*)\right)m_a(X_i, X_j) \right]}{\mathrm{E}\left[m_a(X_i, X_j) \right]} 
\end{equation}
and 
\begin{equation}\label{eqn:ipw_auc_estimand}
    \psi(a) = \frac{\mathrm{E}\left[\frac{I\left(\widehat{\mu}(X_i^*)>\widehat{\mu}(X_j^*), Y_i=1, Y_j=0, A_i = a, A_j = a\right)}{\pi_a(X_i, X_j)} \right]}{\mathrm{E}\left[\frac{I\left(Y_i=1, Y_j=0, A_i = a, A_j = a\right)}{\pi_a(X_i, X_j)}\right]} 
\end{equation}
where the subscripts $i$ and $j$ denote a random pair of observations from the test set where
\begin{equation*}
    m_a(X_i, X_j) = \operatorname{Pr}\left[Y_i=1 \mid X_i,A_i = a, D_{test,i} = 1\right] \Pr\left[ Y_j=0 \mid X_j, A_j = a, D_{test,j} = 1\right]
\end{equation*}
and 
\begin{equation*}
    \pi_a(X_i, X_j) = \Pr\left[A_i = a \mid X_i, D_{test,i} = 1\right] \Pr\left[A_j = a  \mid X_j, D_{test,j} = 1\right]
\end{equation*}
for a pair of covariate vectors $X_i$ and $X_j$. 
\end{theorem}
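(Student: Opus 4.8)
The plan is to exploit the ratio structure of the AUC together with the independence of the random pair $(i,j)$, so that the pairwise identification reduces to two parallel copies of the single-observation argument already used in Theorem 2, glued together by independence. First I would write the estimand in \eqref{eqn:auc_estimand} as a ratio of a joint probability to a normalizing probability. Since $Y_i^a$ and $Y_j^a$ refer to two independent draws from the test set, the definition of conditional probability gives
$$\psi(a) = \frac{\E[I(\widehat{\mu}(X_i^*) > \widehat{\mu}(X_j^*)) I(Y_i^a = 1) I(Y_j^a = 0)]}{\E[I(Y_i^a = 1)]\,\E[I(Y_j^a = 0)]},$$
where the denominator factors because the two observations are i.i.d. Positivity condition E3(ii) guarantees this denominator is strictly positive, so the ratio is well defined.

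Next I would identify numerator and denominator separately. Because $(X_i, A_i, Y_i)$ and $(X_j, A_j, Y_j)$ are independent and the ranking indicator $I(\widehat{\mu}(X_i^*) > \widehat{\mu}(X_j^*))$ is a function of $(X_i, X_j)$ alone, I would condition on $(X_i, X_j)$ and pull that indicator outside, leaving the product of conditional expectations $\E[I(Y_i^a = 1) \mid X_i]\,\E[I(Y_j^a = 0) \mid X_j]$. To each factor I would apply the now-standard chain of exchangeability (E1) and consistency (E2), namely $\E[I(Y_i^a=1)\mid X_i] = \E[I(Y_i^a=1)\mid X_i, A_i=a] = \Pr[Y_i=1\mid X_i, A_i=a]$, and analogously $\E[I(Y_j^a=0)\mid X_j] = \Pr[Y_j=0\mid X_j, A_j=a]$. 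Their product is precisely $m_a(X_i, X_j)$ (restricted to $D_{test}=1$), which recovers the outcome-model functional in \eqref{eqn:cl_auc_estimand}; applying the identical reduction to the denominator yields $\E[m_a(X_i, X_j)]$ by independence.

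For the inverse-probability representation I would instead substitute into each conditional factor the identity
$$\Pr[Y=1\mid X, A=a] = \E\!\left[\frac{I(A=a)}{\Pr[A=a\mid X]} I(Y=1) \,\middle|\, X\right]$$
established in the earlier proofs (and its analogue for $Y=0$), which is valid under E3(i). After conditioning on $(X_i, X_j)$ and using that $\pi_a$ factors across the independent pair, the product of weighted observed-data terms collapses to the ratio in \eqref{eqn:ipw_auc_estimand}, with numerator and denominator sharing the weight $\pi_a(X_i, X_j)$.

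The hard part will be the bookkeeping of the pairwise, U-statistic-type structure: unlike the earlier theorems, the estimand is a functional of two observations, so every identification step must be justified through the joint law of the independent pair and the separate measurability of the ranking indicator with respect to $(X_i, X_j)$. The technical heart is verifying that both $m_a$ and $\pi_a$ factor cleanly across the two observations, so that the single-observation identification can be applied coordinate-wise; once that factorization is in hand, the remainder is two parallel applications of the Theorem 2 argument. A secondary point requiring care is the role of the extended positivity assumption E3(ii), which is exactly what keeps the normalizing denominator bounded away from zero and hence the AUC ratio identifiable.
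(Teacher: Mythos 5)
Your proposal is correct and follows essentially the same route as the paper's proof: both rewrite the counterfactual AUC as a ratio via the definition of conditional probability, condition on the covariate pair $(X_i, X_j)$ to pull out the ranking indicator, factor the joint potential-outcome probability across the independent pair, apply exchangeability (E1) and consistency (E2) coordinate-wise to obtain $m_a(X_i, X_j)$, and derive the inverse-probability form from the weighted representation of $\Pr[Y=1 \mid X, A=a]$ under E3(i), which is exactly the paper's conditional-probability step. The only cosmetic difference is that you factor the normalizing denominator into marginal expectations at the outset by i.i.d.\ sampling, whereas the paper keeps the joint $\Pr[Y_i^a = 1, Y_j^a = 0]$ and invokes independence of the potential outcomes only after conditioning on $(X_i, X_j, A_i = a, A_j = a)$ --- the same independence is doing the work either way.
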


\begin{proof}
    For the first representation we have 
$$
\begin{aligned}
 \psi(a)&= \mathrm{E}\left[I\left(\widehat{\mu}(X_i^*)>\widehat{\mu}(X_j^*)\right) \mid Y^a_i=1, Y^a_j=0\right] \\
&= \frac{\mathrm{E}\left[I\left(\widehat{\mu}(X_i^*)>\widehat{\mu}(X_j^*), Y^a_i=1, Y^a_j=0\right)\right]}{\operatorname{Pr}\left[Y^a_i=1, Y^a_j=0\right]} \\
&= \frac{\mathrm{E}\left[\mathrm{E}\left[I\left(\widehat{\mu}(X_i^*)>\widehat{\mu}(X_j^*), Y^a_i=1, Y^a_j=0\right) \mid X_i, X_j\right]\right]}{\mathrm{E}\left[\operatorname{Pr}\left[Y^a_i=1, Y^a_j=0 \mid X_i, X_j\right]\right]} \\
&= \frac{\mathrm{E}\left[I\left(\widehat{\mu}(X_i^*)>\widehat{\mu}(X_j^*)\right) \operatorname{Pr}\left[Y^a_i=1, Y^a_j=0  \mid X_i, X_j\right]\right]}{\mathrm{E}\left[\operatorname{Pr}\left[Y^a_i=1, Y^a_j=0  \mid X_i, X_j\right]\right]} \\
&= \frac{\mathrm{E}\left[I\left(\widehat{\mu}(X_i^*)>\widehat{\mu}(X_j^*)\right) \operatorname{Pr}\left[Y^a_i=1, Y^a_j=0  \mid X_i, X_j, A_i = a, A_j = a\right]\right]}{\mathrm{E}\left[\operatorname{Pr}\left[Y^a_i=1, Y^a_j=0  \mid A_i = a, A_j = a, X_i, X_j\right]\right]} \\
&= \frac{\mathrm{E}\left[I\left(\widehat{\mu}(X_i^*)>\widehat{\mu}(X_j^*)\right) \operatorname{Pr}\left[Y^a_i=1 \mid X_i,A_i = a\right] \Pr\left[ Y^a_j=0 \mid X_j, A_j = a\right]\right]}{\mathrm{E}\left[\operatorname{Pr}\left[Y^a_i=1 \mid X_i,A_i = a\right] \Pr\left[ Y^a_j=0 \mid X_j, A_j = a\right]\right]} \\
&= \frac{\mathrm{E}\left[I\left(\widehat{\mu}(X_i^*)>\widehat{\mu}(X_j^*)\right) \operatorname{Pr}\left[Y_i=1 \mid X_i,A_i = a\right] \Pr\left[ Y_j=0 \mid X_j, A_j = a\right]\right]}{\mathrm{E}\left[\operatorname{Pr}\left[Y_i=1 \mid X_i,A_i = a\right] \Pr\left[ Y_j=0 \mid X_j, A_j = a\right]\right]} \\
&= \frac{\mathrm{E}\left[I\left(\widehat{\mu}(X_i^*)>\widehat{\mu}(X_j^*)\right) \operatorname{Pr}\left[Y_i=1 \mid X_i,A_i = a\right] \Pr\left[ Y_j=0 \mid X_j, A_j = a\right]\right]}{\mathrm{E}\left[\operatorname{Pr}\left[Y_i=1 \mid X_i,A_i = a\right] \Pr\left[ Y_j=0 \mid X_j, A_j = a\right]\right]} \\
&= \frac{\mathrm{E}\left[I\left(\widehat{\mu}(X_i^*)>\widehat{\mu}(X_j^*)\right)m_a(X_i, X_j) \right]}{\mathrm{E}\left[m_a(X_i, X_j) \right]}
\end{aligned}
$$

where the first line follows from the definition of $\psi(a)$, the second from the definition of conditional probability, the third from the law of iterated expectations, the fourth from the definition of conditional expectation, the fifth from the exchangeability condition E1, the sixth from independence of potential outcomes, the seventh from the consistency condition E2, the eighth from random sampling of the test set, and the ninth applies the definition of $m_a(X_i, X_j)$. Recall that $X^*$ is a subset of $X$. For the second representation, we will show that it is equivalent to the first. Starting from line five above

$$
\begin{aligned}
\psi(a)&= \frac{\mathrm{E}\left[I\left(\widehat{\mu}(X_i^*)>\widehat{\mu}(X_j^*)\right) \operatorname{Pr}\left[Y^a_i=1, Y^a_j=0  \mid X_i, X_j, A_i = a, A_j = a\right]\right]}{\mathrm{E}\left[\operatorname{Pr}\left[Y^a_i=1, Y^a_j=0  \mid A_i = a, A_j = a, X_i, X_j\right]\right]} \\
&= \frac{\mathrm{E}\left[I\left(\widehat{\mu}(X_i^*)>\widehat{\mu}(X_j^*)\right) \operatorname{Pr}\left[Y_i=1, Y_j=0  \mid X_i, X_j, A_i = a, A_j = a\right]\right]}{\mathrm{E}\left[\operatorname{Pr}\left[Y_i=1, Y_j=0  \mid A_i = a, A_j = a, X_i, X_j\right]\right]} \\
&= \frac{\mathrm{E}\left[I\left(\widehat{\mu}(X_i^*)>\widehat{\mu}(X_j^*)\right) \frac{\operatorname{Pr}\left[Y_i=1, Y_j=0, A_i = a, A_j = a \mid X_i, X_j\right]}{\operatorname{Pr}\left[A_i = a, A_j = a \mid X_i, X_j\right]}\right]}{\mathrm{E}\left[\frac{\operatorname{Pr}\left[Y_i=1, Y_j=0, A_i = a, A_j = a \mid X_i, X_j\right]}{\operatorname{Pr}\left[A_i = a, A_j = a \mid X_i, X_j\right]}\right]} \\
&= \frac{\mathrm{E}\left[\mathrm{E}\left[I\left(\widehat{\mu}(X_i^*)>\widehat{\mu}(X_j^*)\right) \frac{\operatorname{Pr}\left[Y_i=1, Y_j=0, A_i = a, A_j = a \mid X_i, X_j\right]}{\operatorname{Pr}\left[A_i = a, A_j = a \mid X_i, X_j\right]} \mid X_i, X_j\right]\right]}{\mathrm{E}\left[\mathrm{E}\left[\frac{\operatorname{Pr}\left[Y_i=1, Y_j=0, A_i = a, A_j = a \mid X_i, X_j\right]}{\operatorname{Pr}\left[A_i=a, A_j = a \mid X_i, X_j\right]} \mid X_i, X_j\right]\right]} \\
&= \frac{\mathrm{E}\left[\frac{I\left(\widehat{\mu}(X_i^*)>\widehat{\mu}(X_j^*)\right)}{\operatorname{Pr}\left[A_i = a \mid X_i\right] \operatorname{Pr}\left[A_j = a \mid X_j\right]} \operatorname{Pr}\left[Y_i=1, Y_j=0, A_i = a, A_j = a \mid X_i, X_j\right]\right]}{\mathrm{E}\left[\frac{\operatorname{Pr}\left[Y_i=1, Y_j=0, A_i = a, A_j = a \mid X_i, X_j\right]}{\operatorname{Pr}[A_i = a \mid X_i] \Pr[A_j = a \mid X_j]}\right]} \\
& = \frac{\mathrm{E}\left[\frac{I\left(\widehat{\mu}(X_i^*)>\widehat{\mu}(X_j^*), Y_i=1, Y_j=0, A_i = a, A_j = a\right)}{\operatorname{Pr}\left[A_i = a \mid X_i\right] \operatorname{Pr}\left[A_j = a \mid X_j\right]} \right]}{\mathrm{E}\left[\frac{I\left(Y_i=1, Y_j=0, A_i = a, A_j = a\right)}{\operatorname{Pr}[A_i = a \mid X_i] \Pr[A_j = a \mid X_j]}\right]} \\
& = \frac{\mathrm{E}\left[\frac{I\left(\widehat{\mu}(X_i^*)>\widehat{\mu}(X_j^*), Y_i=1, Y_j=0, A_i = a, A_j = a\right)}{\pi_a(X_i, X_j)} \right]}{\mathrm{E}\left[\frac{I\left(Y_i=1, Y_j=0, A_i = a, A_j = a\right)}{\pi_a(X_i, X_j)}\right]} 
\end{aligned}
$$

where the second line follows from consistency $E2$, the third from the definition of conditional probability, the fourth from iterated expectations, the fifth removes the constant fraction outside expectation, the sixth reverses the law of iterated expectations and the last applies random sampling of the test set and the definition of $\pi_a(X_i, X_j)$.
\end{proof}

\subsection{Plug-in estimation}
Using sample analogs for the identified expressions \ref{eqn:cl_auc_estimand} and \ref{eqn:ipw_auc_estimand}, we obtain two plug-in estimators for the counterfactual AUC
    \begin{equation*}
        \widehat{\psi}_{OM}(a) = \frac{\sum_{i \neq j}^n\widehat{h}_a(X_i) (1 - \widehat{h}_a(X_j)) I(\widehat{\mu}(X_i^*)>\widehat{\mu}(X_j^*), D_{test, i} = 1,  D_{test, j} = 1) }{\sum_{i \neq j}^n\widehat{h}_a(X_i) (1 - \widehat{h}_a(X_j)) I(D_{test, i} = 1,  D_{test, j} = 1)}
    \end{equation*}
    and 
    \begin{equation*}
        \widehat{\psi}_{IPW}(a) = \frac{\mathlarger{\mathlarger{\sum}}_{i \neq j}^n \dfrac{I\left(\widehat{\mu}(X_i^*)>\widehat{\mu}(X_j^*), Y_i = 1, Y_j = 0, A_i = a, A_j = a, D_{test, i} = 1,  D_{test, j} = 1\right)}{\widehat{e}_a(X_i) \widehat{e}_a(X_j)}}{\mathlarger{\mathlarger{\sum}}_{i \neq j}^n\dfrac{I\left(Y_i = 1, Y_j = 0, A_i = a, A_j = a, D_{test, i} = 1,  D_{test, j} = 1\right)}{\widehat{e}_a(X_i) \widehat{e}_a(X_j)}}
    \end{equation*}
    where $\widehat{m}_a(X_i, X_j) = \widehat{h}_a(X_i) (1 - \widehat{h}_a(X_j))$ and $\widehat{h}_a$ is an estimator for $\operatorname{Pr}[Y_i=1 | X_i,A_i = a, D_{test,i} = 1]$ and where $\widehat{\pi}_a(X_i, X_j) = \widehat{e}_a(X_i) \widehat{e}_a(X_j)$ and $\widehat{e}_a$ is an estimator for $\Pr[A_i = a | X_i, D_{test,i} = 1]$. Here, we call the first plug-in estimator the outcome model estimator $ \widehat{\psi}_{OM}$ and the second the inverse probability weighted estimator $\widehat{\psi}_{IPW}$. 

\newpage
\section{Additional application details}
The Multi-Ethnic Study on Atherosclerosis (MESA) study is a population-based sample of 6,814 men and women aged 45 to 84 drawn from six communities (Baltimore; Chicago; Forsyth County, North Carolina; Los Angeles; New York; and St. Paul, Minnesota) in the United States between 2000 and 2002. The sampling procedure, design, and methods of the study have been described previously \cite{bild_multi-ethnic_2002}. Study teams conducted five examination visits between 2000 and 2011 in 18 to 24 month intervals focused on the prevalence, correlates, and progression of subclinical cardiovascular disease. These examinations included assessments of lipid-lowering (primarily statins) and other medication use as well as cardiovascular risk factors such as systolic blood pressure, serum cholesterol, cigarette smoking, height, weight, and diabetes. 

Our goal was to emulate a single-arm trial corresponding to the AHA guidelines on initiation of statin therapy for primary prevention of cardiovascular disease in the MESA cohort and use the emulated trial to develop a prediction model for the treatment-naive risk. The AHA guidelines stipulate that patients aged 40 to 75 with serum LDL cholesterol levels between 70 mg/dL and 190 mg/dL and no history of cardiovascular disease should initiate statins if their risk exceeds 7.5\%. Therefore, we considered MESA participants who completed the baseline examination, had no recent history of statin use, no history of cardiovascular disease, and who met the criteria described in the guidelines (excluding the risk threshold) as eligible to participate in the trial. The primary endpoint was time to atherosclerotic cardiovascular disease (ASCVD), defined as nonfatal myocardial infarction, coronary heart disease death, or ischemic stroke. 

Follow up began at the second examination cycle to enable a ``wash out'' period for statin use and to ensure adequate pre-treatment covariates to control confouding. We constructed a sequence of nested trials starting at each examination cycle from exam 2 through exam 5 and pooled the results from all 4 trials into a single analysis and used a robust variance estimator to account for correlation among duplicated participants. In each nested trial, we used the corresponding questionnaire to determine eligibility as well as statin initiators versus non-initiators. Because the exact timing of statin initiation was not known with precision, in each trial, we estimated the start of follow up for initiators and non-initators by drawing a random month between their current and previous examinations. We explored alternative definitions of the start of follow up in sensitivity analyses in the appendix. To mimic the targeted single-arm trial we limited to non-initiators for development of the prediction models.

\subsection{Propensity score models}\label{sec:covs}
In the emulated single arm trial, statin initiation can be viewed as ``non-adherence'' which can be adjusted for by inverse probability weighting, therefore we censored participants when they initiated statins. To calculate the weights, we estimated two logistic regression models: one for the probability of remaining untreated given past covariate history (denominator model) and one for probability of remaining untreated given the selected baseline predictors (numerator model). In the denominator model we included the following covariates:
\begin{itemize}
    \item \textit{Demographic factors} - Age, gender, marital status, education, race/ethnicity, employment, health insurance status, depression, perceived discrimination, emotional support, anger and anxiety scales, and neighborhood score.
    \item \textit{Risk factors} - Systolic and diastolic blood pressure, serum cholesterol levels (LDL, HDL, Triglycerides), hypertension, diabetes, waist circumference, smoking, alcohol consumption, exercise, family history of CVD, calcium score, hypertrophy on ECG, CRP, IL-6, number of pregnancies, oral contraceptive use, age of menopause.
    \item \textit{Medication use} - Anti-hypertensive use, insulin use, daily aspirin use, anti-depressant use, vasodilator use, anti-arryhtmic use. 
\end{itemize}
Time-varying demographic factors and risk factors were lagged such that values from the previous examination cycle were used. 


\end{appendix}

\onehalfspacing

\end{document}